\newtheoremstyle{slanted}
{3pt}
{3pt}
{\slshape}
{}
{\bfseries}
{.}
{.5em}
{}
\theoremstyle{slanted}
\newtheorem{theorem}{Theorem}
\newtheorem{definition}[theorem]{Definition}
\def\eps{\epsilon}
\def\Oreg{\mbox{O}}
\def\Otld{\mbox{\~{O}}}
\def\chr#1{{\tt{#1}}}
\def\str#1{{\tt{#1}}}
\def\pequ{\simeq}
\def\pjuxt{+}
\def\pjoin{\oplus}
\def\pathseq#1{\langle{#1}\rangle}
\def\WP{{\mathcal P}}
\def\CC{\mbox{\it cost}}
\def\DD{\mbox{\it dist}}
\def\LL{\mbox{\it len}}
\def\DTW{\mbox{\it DTW}}
\def\DTWtwo{\tilde{\mbox{\it DTW}}}
\def\GDTW{G_{\it D\kern-0.70ptT\kern-1.40ptW}}
\def\TDTW{T_{\it D\kern-0.70ptT\kern-1.40ptW}}
\def\myran#1{[#1]}
\def\myset#1{\{#1\}}
\def\myrle#1{\hat{#1}}
\def\mylen#1{|#1|}
\def\myrlen#1{\|#1\|}
\def\mydist{\mbox{$\delta$}}
\def\mynear{\mbox{\it snap}}
\def\letters{\mbox{$\Sigma$}}
\def\edge{\mbox{\textbf e}}
\def\vrtx{\mbox{\textbf v}}
\def\pairg#1{\langle{#1}\rangle}
\newcommand{\defn}[1]{\emph{\textbf{{#1}}}}
\newcommand{\poly}{\operatorname{poly}}
\newcommand{\polylog}{\operatorname{polylog}}
\renewcommand{\paragraph}[1]{\vspace{.5 cm} \noindent \textbf{#1} }
\def\BB{\mbox{\bf B}}
\def\mylr{\mbox{\it lr}}
\title{%
Approximating Dynamic Time Warping Distance\break Between Run-Length Encoded Strings
}
\author{Zoe Xi\footnote{\textit{zoexi@bu.edu}} \\ Boston University Academy \and William Kuszmaul\footnote{\textit{kuszmaul@mit.edu}} \\ Massachusetts Institute of Technology}
\date{}
\begin{document}

\maketitle


\begin%
{abstract}
Dynamic Time Warping (DTW) is a widely used similarity measure for
comparing strings that encode time series data, with applications to
areas including bioinformatics, signature verification, and speech recognition.  
The standard dynamic-programming algorithm for DTW takes $O(n^2)$ time,
and there are conditional lower bounds showing that no algorithm can 
do substantially better. 

In many applications, however, the strings $x$ and $y$ may contain
long runs of repeated letters, meaning that they can be compressed
using run-length encoding.  A natural question is whether the
DTW-distance between these compressed strings can be computed
efficiently in terms of the lengths $k$ and $\ell$ of the compressed
strings.  Recent work has shown how to achieve $O(k\ell^2 + \ell
k^2)$ time, leaving open the question of whether a near-quadratic
$\tilde{O}(k\ell)$-time algorithm might exist.

We show that, if a small approximation loss is permitted, then a
near-quadratic time algorithm is indeed possible: our algorithm
computes a $(1 + \epsilon)$-approximation for $\DTW(x, y)$ in
$\tilde{O}(k\ell / \epsilon^3)$ time, where $k$ and $\ell$ are the
number of runs in $x$ and $y$. Our algorithm allows for $\DTW$ to be
computed over any metric space $(\Sigma, \delta)$ in which distances
are $\Oreg(\log n)$-bit integers. Surprisingly, the algorithm also
works even if $\delta$ does not induce a metric space on $\Sigma$
(e.g., $\delta$ need not satisfy the triangle inequality).
\end{abstract}

\section{Introduction}
\label{section:introduction}
Dynamic Time Warping (DTW) distance is a well-known similarity measure
for comparing strings that represent time-series data. DTW distance
was first introduced by Vintsyuk in 1968~\cite{vintzyuk1968}, who
applied it to the problem of speech discrimination. In the decades
since, DTW has become one of the most widely used similarity
heuristics for comparing time series~\cite{DBLP:journals/pr/Liao05} in
applications such as bioinformatics, signature verification, and
speech
recognition~\cite{dtwapp1,dtwapp2,dtwapp3,dtwapp4,dtwapp5,dtwapp6}.

Consider any two strings $x$ and $y$, with characters taken from some
metric space $(\Sigma, \delta)$. For example, in many applications, we
have that $\Sigma = \mathbb{R}^{c}$ for some parameter $c$ and that
$\delta(a, b) = \| a-b\|_2$ computes $\ell_2$ distance.  Define a
\defn{time warp} of $x$ (and similarly of $y$) to be any string $x'$
that can be obtained by \defn{warping} the letters in $x$, where
warping a letter means replacing it with $\ge 1$ consecutive copies of
itself. The DTW-distance $\DTW(x, y)$ is defined to be
\[
\min_{|x'| = |y'|} \sum_{i = 1}^{|x'|} \delta(x'_i, y'_i),
\]
where $x'$ and $y'$ range over all time warps of $x$ and $y$. 

The most fundamental question concerning DTW is how to compute it
efficiently. Vintsyuk showed that, given strings $x$ and $y$ of length
$n$, it is possible to compute $\DTW(x, y)$ in $\Oreg(n^2)$
time~\cite{vintzyuk1968}. His algorithm, which was one of the earliest
uses of dynamic programming, continues to be taught in textbooks and
algorithms courses today.

For many decades, it was an open question whether any algorithm could
achieve a running time of $\Oreg(n^{2 - \Omega(1)})$. (Interestingly,
it is known that one \emph{can} shave small sub-polynomial factors off
of the running time~\cite{DBLP:conf/icalp/GoldS17}.)  A major
breakthrough occurred in 2015, when Abboud, Backurs, and Williams
\cite{DTWhard2} and Bringmann and K{\"u}nnemann \cite{DTWhard}
established conditional lower bounds prohibiting any strongly
subquadratic-time algorithm for DTW, unless the Strong Exponential
Time Hypothesis (SETH) fails.

This lower bound puts us in an interesting situation. On one hand, the
classic $\Oreg(n^2)$-time algorithm is often too slow for practical
applications. On the other hand, we have good reason to believe that
it is nearly optimal. This has led researchers to focus on forms of beyond-worst-case
analysis when studying the DTW problem.

An especially appealing
question~\cite{DBLP:journals/kais/SharabianiDHDKJ18,DBLP:journals/corr/abs-1903-03003}
is what happens if $x$ and $y$ both contain long runs of repeated
letters.  In this case, the strings can be compressed using run-length
encoding (RLE). For example, the string ``$\str{aaaaabbc}$'' has RLE
encoding ``($\chr{a}$, 5), ($\chr{b}$, 2), ($\chr{c}$, 1)''. If a
string $x$ has $k$ runs, then it is said to have an RLE representation
of length $k$.

It is known that, if $x$ and $y$ each contain $k$ runs, then $\DTW(x,
y)$ can be computed in $\Oreg(k^3)$
time~\cite{DBLP:journals/corr/abs-1903-03003}.\footnote{%
More generally, if $x$ contains $k$ runs and $y$ contains $\ell$ runs,
then the time becomes $\Oreg(k\ell^2 + \ell k^2)$.
}
It is still an open question whether it is possible to significantly
reduce this cubic running time, and in particular, whether a
near-quadratic time algorithm might be possible.

\paragraph
{This paper: A Near-Quadratic Approximation Algorithm.}  
We show that, if a small approximation loss is permitted, then a
near-quadratic time algorithm is indeed possible. Consider any two
run-length encoded strings $x \in \Sigma^{n}$ and $y \in \Sigma^{n}$,
where $x$ has $k$ runs and $y$ has $\ell$ runs. Let $\delta$ be an
arbitrary distance function $\delta: \Sigma \times \Sigma \rightarrow
[\poly(n)]$ mapping pairs of characters to $\Oreg(\log n)$-bit
nonnegative integers. (Perhaps surprisingly, our algorithms will not
require $\delta$ to satisfy the triangle inequality, or even to be
symmetric.)

Our main result is an algorithm that computes a $(1+\eps)$-approximation for $\DTW(x, y)$
in $\Otld(k\ell/\eps^3)$ time\footnote{Here we are using
  soft-O notation to mean that $\Otld(k\ell/\eps^3)$ is equivalent to
  $\Oreg((k\ell/\eps^3)\polylog(n))$.}. In the special case where
 $\Sigma$ is over Hamming space (i.e., $\delta(a, b)$ is either $0$ or $1$ for all $a, b \in
\Sigma$), the running time of our algorithm further improves to $\Otld(k\ell / \eps^2)$. 

Our algorithm takes a classical geometric interpretation of DTW in
terms of paths through a grid, and shows how to decompose each path in
such a way that its components can be efficiently approximated. This
allows for us to reduce the problem of approximating DTW-distance
between RLE strings to the problem of computing pairwise distance in a
small directed acyclic graph.

\paragraph{Other related work.}
In addition to work on run-length-encoded strings
\cite{DBLP:journals/kais/SharabianiDHDKJ18,DBLP:journals/corr/abs-1903-03003},
there has been a recent push to study other theoretical facets of the
DTW problem. This includes work on approximation algorithms
\cite{DBLP:conf/icalp/Kuszmaul19, DTWapprox1, DTWapprox2},
low-distance-regime algorithms \cite{DBLP:conf/icalp/Kuszmaul19},
communication complexity \cite{dtwcomm}, slightly-subquadratic
algorithms \cite{DBLP:conf/icalp/GoldS17}, reductions to other
similarity measures
\cite{DBLP:conf/icalp/Kuszmaul19,sakai2020reduction, sakai2020faster},
binary DTW \cite{kuszmaul2021binary, schaar2020faster}, etc.

All of these results (along with the results in this paper) can be
viewed as part of a larger effort to close the gap between what is
known about DTW and what is known about its closely related cousin
\defn{edit distance}, which measures the number of insertions,
deletions, and substitutions of characters needed to turn one string
$x$ into another string $y$. Like DTW, edit distance can be computed
in $\Oreg(n^2)$ time using dynamic programming \cite{vintzyuk1968,
  needleman1970general} (and can be computed in \emph{slightly}
subquadratic time using lookup-table techniques
\cite{masek1980faster}).  Also like DTW, edit distance has
conditional lower bounds \cite{DTWhard,
  DTWhard2,DBLP:conf/icalp/Kuszmaul19} prohibiting strongly
subquadratic time algorithms.

When it comes to beyond-worst-case analysis, however, edit distance
has yielded much stronger results than DTW: it is known how to compute
a constant-approximation for edit distance in strongly subquadratic
time \cite{edapprox1, edapprox2, edapprox3, edapprox4, edapprox5,
  edapprox6, edapprox7, edapprox8}; it is known how to compute the
edit distance between RLE strings in $\tilde{O}(k\ell)$ time
\cite{rle1, rle2, rle3, rle4, rle5, rle6, rle7, rle8, rle9}; and if
two strings $x$ and $y$ have small edit distance $k$, it is known how
to compute the edit distance in $\Oreg(|x|+|y|+ k^2)$ time
\cite{chakraborty2016streaming, landau1998incremental}.

Whether or not any of these results can be replicated for DTW remains
the central open question in modern theoretical work on DTW. There are
several reasons to believe that DTW computation should be more
challenging than edit distance. Whereas edit distance satisfies the
triangle inequality, DTW does not (for example, if we take $\Sigma =
\{0, 1\}$, then $\DTW(111110, 100000) = 0$, $\DTW(100000, 000000) =
1$, and $\DTW(111110, 000000) = 5$). This erratic behavior of DTW
seems to make it especially difficult to approximate. Additionally,
whereas almost all work on edit distance focuses on
insertion/deletion/substitution costs of $1$, work on DTW must
consider arbitrary cost functions $\delta$ for comparing
characters. Finally, it is known that the problem of computing edit
distance actually \emph{reduces} to that of computing DTW
\cite{DBLP:conf/icalp/Kuszmaul19}, indicating that the latter problem
is at least as hard (although, interestingly, this reduction
\emph{does not} apply in the run-length encoded setting).

Our paper represents the first evidence that an
$\tilde{O}(k\ell)$-time algorithm for DTW may be within reach. Such an
algorithm would finally unify edit distance and DTW in the
run-length-encoded setting.

\section{Technical Overview}\label{sec:technical}

This section gives a technical overview of how we approximate
$\DTW$-distance between run-length encoded strings. To simplify
exposition, we focus here only on the big ideas in the algorithm
design and we defer the detailed analysis to later sections.

Throughout the section, we consider two strings $x$ and $y$ of length
$n$ whose characters are taken from a set $\Sigma$ with a
symmetric distance function $\delta: \Sigma \times \Sigma \rightarrow
\mathbb{N} \cup \{0\}$. Our only assumption on $\delta$ is that
$\delta(a, b) \in \{0, 1, 2, \ldots, \poly(n)\}$ for all $a, b \in
\Sigma$. (We do not need the triangle inequality on $\delta$.) Let $k$
and $\ell$ be the number of runs in $x$ and $y$, respectively.  We
will describe a $(1 + \Oreg(\eps))$-approximate algorithm that takes
$\tilde{O}(k \ell / \poly(\eps))$ time.

\paragraph{How to think about $\DTW$.}
There are several mathematically equivalent ways (see, e.g.,
\cite{DBLP:conf/icalp/Kuszmaul19,
  DBLP:conf/icalp/GoldS17,DBLP:journals/corr/abs-1903-03003}) to
define the dynamic time warping distance between $x$ and $y$. In this
paper, we work with the geometric interpretation: consider an $n
\times n$ grid where cell $(i, j)$ has \defn{cost} $\delta(x_i, y_j)$;
consider the \defn{paths} through the grid that travel from $(1, 1)$
to $(n, n)$ via steps of the form $\langle 1, 0 \rangle$ (a horizontal
step (h-step) to the right), $\langle 0, 1\rangle$ (a vertical step
(v-step) up), and $\langle 1, 1\rangle$ (a diagonal step (d-step) to
the upper right); the \defn{cost} of such a path is the sum of the
costs of the cells that it encounters, and $\DTW(x, y)$ is defined to
be the smallest cost of any such path. For an example, see Figure
\ref{figure:DTWexample1}, which shows an optimal full path for computing
$\DTW(\str{aaabbbbddd}, \str{aabcdd}) = 1$, where the
$\delta$-function measures the distance between characters in the
alphabet.

Note that the $i$-th column of the grid corresponds to $x_i$ and the
$j$-th row of the grid corresponds to $y_j$. Thus, each run $x_{i_0},
\ldots, x_{i_1}$ in $x$ corresponds to a sequence of adjacent columns
$i_0, \ldots, i_1$ in the grid, and each run $y_{j_0}, \ldots,
y_{j_1}$ in $y$ corresponds to a sequence of adjacent rows $j_0,
\ldots, j_1$ in the grid.

If we want to design an algorithm that approximates $\DTW(x, y)$ in
$\tilde{O}(k \ell / \poly(\eps))$ time, then it is natural to think
about the grid as follows. We break the grid into \defn{blocks} by
drawing a vertical line between every pair of runs in $x$ and a
horizontal line between every pair of runs in $y$; and label the
blocks $\{\BB_{i, j}\}_{i\in[k], j\in[\ell]}$, where block $\BB_{i,
  j}$ corresponds horizontally to the $i$-th run in $x$ and vertically
to the $j$-th run in $y$.  All of the cells within a given block $B$
have the same cost, which we refer to as $\delta(B)$. We may also use
$\delta_{i,j}$ for $\delta(\BB_{i,j})$.  We refer to the first/last
row of each block as a lower/upper \defn{horizontal boundary} and to
the first/last column of each block as a left/right \defn{vertical
  boundary}.

Finally, it will be helpful to talk about sequences of blocks that are
adjacent horizontally or vertically. An \defn{h-block segment}
consists of a sequence of consecutive blocks lined up
horizontally. Formally, given $i_1\leq i_2$ in $[k]$ and $j$ in
$[\ell]$, we use $\BB_{[i_1,i_2],j}$ for the h-block segment
$\BB_{i_1,j},\BB_{i_1+1,j},\ldots,\BB_{i_2,j}$. Similarly, a
\defn{v-block segment} consists of a sequence of consecutive blocks
lined up vertically---we use $\BB_{i,[j_1,j_2]}$ for the v-block
segment $\BB_{i,j_1},\BB_{i,j_1+1},\ldots,\BB_{i,j_2}$. In the same
way that we can talk about the four boundaries of a block, we can talk
about the four boundaries of a given h-block or v-block segment.

Intuitively, since there are $\Oreg(k\ell)$ blocks, our goal is to design
an algorithm that runs in time roughly proportional to the number of
blocks.

\paragraph{How to think about the optimal path.}
Let $P$ be a minimum-cost path through the grid. We can decompose the
path into a sequence of disjoint \defn{components} $P_1, P_2, P_3,
\ldots$, where each component takes one of two forms:
\begin%
{enumerate}
\item
A \textbf{horizontal-to-vertical} (h-to-v) component connects a cell
on the lower boundary of some v-block segment to another cell on the
right boundary of the same v-block segment.
\item
A \textbf{vertical-to-horizontal} (v-to-h) component connects a cell
on the left boundary of some h-block segment to another cell on the
upper boundary of the same h-block segment.
\end{enumerate}
The components $P_1, P_2, P_3, \ldots$ are defined such that the end
cell of each $P_r$ connects to the the start cell of each $P_{r+1}$
via a single step (either horizontal, vertical, or diagonal).

We will now describe a series of simplifications that we can make to
$P$ while increasing its total cost by at most a $(1 +
\Oreg(\eps))$-factor. The simplifications are central to the design of
our algorithm.

\paragraph%
{Simplification 1: Rounding each component to start and end on ``snap
  points''.}
Let us call a grid cell $(i, j)$ an \defn{intersection point} if it
lies in the intersection of a horizontal boundary and a vertical
boundary. (Each block contains at most four intersection points.)  We
call a grid cell a \defn{snap point} if either it is an intersection
point, or it is of the form $(i+(1+\eps)^{t}, j)$ on the upper
boundary of a block $B$, or it is of the form $(i+1+(1+\eps)^{t}, j)$
on the lower boundary of a block $B$, or it is of the form $(i,
j+(1+\eps)^{t})$ on the right boundary of a block $B$, or it is of the
form $(i, j+1+(1+\eps)^{t})$ on the left boundary of a block $B$,
where $(i, j)$ is an intersection point of the block $B$ and $t$ is
nonnegative integer (since this is a technical overview, we ignore
floor and ceiling issues). For each boundary cell $p$ in the grid,
define $\mynear(p)$ to be the nearest snap point to the right of $p$,
if $p$ is on a horizontal boundary, and to be the nearest snap point
above $p$, if $p$ is on a vertical boundary. If $p$ is on both a
horizontal boundary and a vertical boundary, then $p$ is an
intersection point, so $\mynear(p) = p$.

How much would the cost of $P$ increase if we required each of its
components to start and end on snap points? Suppose, in particular,
that we replace each component $P_r$ with a component $P'_r$ whose
start point $p_r$ has been replaced with $\mynear(p_r)$ and whose end
point $q_r$ has been replaced with $\mynear(q_r)$.  It may be that
$\mynear(q_r)$ does not connect to $\mynear(p_{r + 1})$, meaning that
$P_r'$ and $P_{r + 1}'$ do not connect properly. If this happens,
however, then one can simply modify the starting-point of $P_{r + 1}'$
in order to connect it to $P_r'$ (and it turns out this only makes
$P_{r + 1}'$ cheaper).


Let $P'$ be the concatenation of $P'_1$, $P'_2$, $P'_3$, $\ldots$.  To
bound the cost of $P'$, we can argue that the cost of each $P'_r$ is
at most $(1+\eps)$ times that of $P_r$. To transform $P_r$ into
$P'_r$, the first step is to round the start point $p_r$ of $P_r$ to
$\mynear(p_r)$---one can readily see that this only decreases (or
leaves unchanged) the cost of $P_r$. The second step is to round the
end point $q_r$ of $P_r$ to $\mynear(q_r)$. For simplicity, assume
that $P_r$ is an h-to-v component that starts on the lower boundary of
some block $\BB_{i,j_1}$ and finishes on the right boundary of some
block $\BB_{i, j_2}$. Let $(u, v)$ be the lower-right intersection
point of $\BB_{i, j_2}$ and suppose that $P_r$ finishes in cell
$(u,v+s)$. Then $P_r$ incurs cost at least $(s+1)\cdot\delta_{i, j_2}$ in
block $\BB_{i, j_2}$. Moreover, the snap point $\mynear(q_r) =
\mynear(u,v+s)$ is guaranteed to be in the set $\{(u, v+s+t)\}_{t \in
  \{0, 1, \ldots, \eps\cdot{s}\}}$. Thus the cost of traveling from
$q_r$ to $\mynear(q_r)$ is at most $\eps\cdot{s}\cdot\delta_{i,
  j_2}$. So the cost of $P'_r$ is at most $(1+\eps)$ times that of
$P_r$.

By analyzing each component in this way, we can argue that
$\operatorname{cost}(P') \le (1+\eps)
\operatorname{cost}(P)$. Throughout the rest of the section, we will
assume that $P$ has been replaced with $P'$, meaning that each
component starts and ends with a snap point.

\paragraph{Simplification 2: Understanding the structure of each component.}
Next we observe that each individual component can be assumed to have
a relatively simple structure. For simplicity, let us focus on an
h-to-v component $P_r$ in a v-block segment $\BB_{i,[j_1,j_2]}$.  We
may assume without loss of generality that all of $P_r$'s h-steps
occur together on the lower boundary of some block; and that all of
$P_r$'s v-steps occur at the end of $P_r$. In other words, $P_r$ is of
the form $D_1\pjoin{H}\pjoin{D_2}\pjoin{U}$ where $\pjoin$ is for path
concatenation, $D_1$ consists of d-steps, $H$ consists of h-steps
(along a lower boundary), $D_2$ again consists of d-steps, and $U$
consists of v-steps (along a right boundary).\footnote{Note that the
  components $D_1$, $H$, $D_2$, and $U$ are each individually allowed
  to be length 0.} (See Figure~\ref{figure:h-to-v-path-approximation}
where the path $p_1q_1q_2p_2p_3$ in solid lines is such an example.)

Combined, these assumptions make it so that $P_r$ is fully determined
by four quantities: (1) $P_r$'s start point $p_r$, (2) the block
$\BB_{i, j}$ in which $H$ occurs, (3) the length of $H$, and (4) the
length of $U$.

Define $\overline{P_r}$ to be the prefix of $P_r$ that
terminates as soon as $U$ hits its first snap point.  (See
Figure~\ref{figure:h-to-v-path-approximation} where the path
$p_1q_1q_2p_2p'_2$ is such a prefix of the path $p_1q_1q_2p_2p_3$.)  We
will see later that $\overline{P_r}$ is, in some sense, the
``important'' part of $P_r$ to our algorithm. Observe that
$\overline{P_r}$ is fully determined by just three quantities: (1)
$P_r$'s start point $p_r$, (2) the block $\BB_{i, j}$ in which $H$
occurs, and (3) the length of $H$.

\paragraph{Simplification 3: Reducing the number of options for each component.} 
We will now argue that, if we fix the start point $p_r$, and we are
willing to tolerate a $(1 + \Oreg(\eps))$-factor approximation loss, then we only need to consider $\poly(\eps^{-1} \log n)$ options for $\overline{P_r}$. 

We begin by considering block $\BB_{i,j}$ in which $H$ occurs. Let us
define the sequence of blocks $B_0, B_1, B_2, \ldots$ so that $B_s =
\BB_{i, j + s}$ and define the sequence of costs $\delta_0,
\delta_1, \delta_2, \ldots$ so that $\delta_s = \delta_{i, j +
  s}$. We say that a block $B_s$ is \defn{extremal} if $(1 +
\eps)\delta_s \le \delta_{t}$ for all $t < s$. If we are willing to
tolerate a $(1 + \Oreg(\eps))$-factor increase in $\overline{P_r}$'s cost, then we
can assume without loss of generality that $H$ occurs in an extremal
block. On the other hand, there are only $\Oreg(\log_{1+\eps}(n))$
extremal blocks, so this means that we only need to consider
$\Oreg(\log_{1+\eps}(n)) \le \poly(\eps^{-1} \log n)$ options for the
starting point of $H$.

Next we consider the length of the horizontal sub-component $H$. If we
are willing to tolerate a $(1 + \Oreg(\eps))$-factor increase in $\overline{P_r}$'s
cost, then we can round $|H|$, the length of $H$, up to be a power of
$(1+\eps)$ (or to be whatever length brings us to the next vertical
boundary). Thus we only need to consider $\Oreg(\log_{1+\eps}(n)) \le
\poly(\eps^{-1} \log n)$ options for $|H|$.


Together, the block $\BB_{i, j}$ in which $H$ occurs and the length of $H$ fully determine $\overline{P_r}$. Thus, we have reached the following conclusion:
if the start point $p_r$ of the component $\overline{P_r}$ is known, then there
are only $\poly(\eps^{-1} \log n)$ options that we must consider for what $\overline{P_r}$ could look like. Moreover, although we have considered only
h-to-v components here, one can make a similar argument for v-to-h components. 

\paragraph{Approximating $\DTW$ in $\tilde{O}(k \ell / \poly(\eps))$ time.} 
We will now construct a weighted directed acyclic graph $G = \pairg{V, E}$
that has two special vertices $\vrtx_0$ and $\vrtx_{*}$ and that satisfies the
following properties:
\begin{itemize}
\item $G$ has a total of $\tilde{O}(k \ell / \poly(\eps))$ vertices/edges, and
\item the distance from $\vrtx_0$ to $\vrtx_{*}$ in $G$ is a $(1 + \Oreg(\eps))$-approximation for $\DTW(x, y)$. 
\end{itemize}
This reduces the problem of approximating $\DTW(x, y)$ to the problem
of computing a distance in a weighted directed acyclic graph. The
latter problem, of course, can be solved in linear time with dynamic
programming; thus the graph $G$ give us a $\tilde{O}(k \ell /
\poly(\eps))$-time $(1 + \Oreg(\eps))$-approximation algorithm for
$\DTW$.

We construct $G$ to capture the different ways in which path
components $P_r$ can connect together (assuming that the path
components take the simplified forms described above). As the vertices
$v \in V$ correspond to the snap points $p$ in the grid, we can use a
vertex to refer to its corresponding snap point and vice versa.  We
define $\vrtx_0$ to be the cell $(1, 1)$ in the grid and $\vrtx_{*}$
to be the cell $(n, n)$.  We add edges $E$ as follows:
\begin{itemize}
\item
We connect each snap point $p$ on a horizontal (resp. vertical)
boundary to the next snap point $q$ to its right (resp. above it).

\item
We connect each snap point $p$ on a right (resp. upper) boundary to
any snap points $q$ on the adjacent left (resp. lower) boundary that
can be reached from $p$ in a single step.

\item
Each snap point $p \in V$ has $\poly(\eps^{-1} \log n)$ out-edges
corresponding to the $\poly(\eps^{-1} \log n)$ options for what a
(truncated) component $\overline{P_r}$ starting at $p$ could look
like.\footnote{Note that $ G $ is not necessarily simple. If there are
  multiple ways that a component $\overline{P_r}$ could connect two
  vertices $p_1$ and $p_2$, then there will be multiple edges from
  $p_1$ to $p_2$.}
\end{itemize}
Note that, although we only add edges for \emph{truncated} path
components $\overline{P_r}$ (rather than full components $P_r$), these
edges can be combined with edges of the first type in order to obtain
the full component. This is why we said earlier that the truncated
component is the ``important'' part of the component.

The paths from $\vrtx_0$ to $\vrtx_{*}$ in $G$ correspond to the ways
in which we can concatenate path components together to get a full
path through the grid; if we assign the appropriate weights to the
edges, then the cost of a path through $G$ corresponds to the cost of
the same path through the grid. The distance from $\vrtx_0$ to
$\vrtx_{*}$ is therefore a $(1 + \Oreg(\eps))$-approximation for
$\DTW(x, y)$.

Finally, we must bound the size of $ G $. Each block
contains at most four intersection points; so there are $\Oreg(k\ell)$
total intersection points. Each intersection point creates at most
$\Oreg(\log_{1+\eps}(n))$ snap points; so there are $\Oreg(k \ell \eps^{-1}
\log n)$ snap points (which are the vertices in $V$). Each snap point
has an out-degree of at most $\poly(\eps^{-1} \log n)$. Hence we have:
\[
|E| \le \Oreg(k \ell \eps^{-1} \log n) \poly(\eps^{-1} \log n) =  \tilde{O}(k \ell / \poly(\eps)).
\]
We can therefore compute the distance from $\vrtx_0$ to $\vrtx_{*}$ in
$\tilde{O}(k \ell / \poly(\eps))$ time, as desired.

\paragraph{Paper outline.}
For the sake of simplicity, there are a number of details that we
chose to ignore in this section (such as a time-efficient construction
of $G$ and a careful proof that the modifications to $P$ incur only a
$(1+\Oreg(\eps))$-factor change in its cost). In the remainder of the
paper, we give a formal presentation and analysis of the algorithm
outlined above.

\section
{Preliminaries}
\label{section:preliminaries}

\def\myrun#1#2{{#1}{\char094}{#2}}

\begin%
{figure}[t]
\centering
\begin%
{tikzpicture}[scale=0.75]
\draw[step=1cm,gray,thin](0,0) grid (10,6);
\filldraw[fill=gray](0,0) rectangle (1,1);
\filldraw[fill=gray](1,1) rectangle (2,2);
\filldraw[fill=gray](2,1) rectangle (3,2);
\filldraw[fill=gray](3,2) rectangle (4,3);
\filldraw[fill=gray](4,2) rectangle (5,3);
\filldraw[fill=gray](5,2) rectangle (6,3);
\filldraw[fill=gray](6,3) rectangle (7,4);
\filldraw[fill=gray](7,4) rectangle (8,5);
\filldraw[fill=gray](8,5) rectangle (9,6);
\filldraw[fill=gray](9,5) rectangle (10,6);
\foreach \y in {0.5,1.5}
\foreach \x in {0.5,1.5,2.5}
  \draw (\x cm,\y cm) -- (\x cm,\y cm) node {$0$};
\foreach \y in {0.5,1.5}
\foreach \x in {3.5,4.5,5.5,6.5}
  \draw (\x cm,\y cm) -- (\x cm,\y cm) node {$1$};
\foreach \y in {0.5,1.5}
\foreach \x in {7.5,8.5,9.5}
  \draw (\x cm,\y cm) -- (\x cm,\y cm) node {$3$};
\foreach \y in {2.5}
\foreach \x in {0.5,1.5,2.5}
  \draw (\x cm,\y cm) -- (\x cm,\y cm) node {$1$};
\foreach \y in {2.5}
\foreach \x in {3.5,4.5,5.5,6.5}
  \draw (\x cm,\y cm) -- (\x cm,\y cm) node {$0$};
\foreach \y in {2.5}
\foreach \x in {7.5,8.5,9.5}
  \draw (\x cm,\y cm) -- (\x cm,\y cm) node {$2$};
\foreach \y in {3.5}
\foreach \x in {0.5,1.5,2.5}
  \draw (\x cm,\y cm) -- (\x cm,\y cm) node {$2$};
\foreach \y in {3.5}
\foreach \x in {3.5,4.5,5.5,6.5}
  \draw (\x cm,\y cm) -- (\x cm,\y cm) node {$1$};
\foreach \y in {3.5}
\foreach \x in {7.5,8.5,9.5}
  \draw (\x cm,\y cm) -- (\x cm,\y cm) node {$1$};
\foreach \y in {4.5,5.5}
\foreach \x in {0.5,1.5,2.5}
  \draw (\x cm,\y cm) -- (\x cm,\y cm) node {$3$};
\foreach \y in {4.5,5.5}
\foreach \x in {3.5,4.5,5.5,6.5}
  \draw (\x cm,\y cm) -- (\x cm,\y cm) node {$2$};
\foreach \y in {4.5,5.5}
\foreach \x in {7.5,8.5,9.5}
  \draw (\x cm,\y cm) -- (\x cm,\y cm) node {$0$};
\foreach \y in {0.5}
  \draw (1pt,\y cm) -- (-1pt,\y cm) node[anchor=east] {$\chr{a}$};
\foreach \y in {1.5}
  \draw (1pt,\y cm) -- (-1pt,\y cm) node[anchor=east] {$\chr{a}$};
\foreach \y in {2.5}
  \draw (1pt,\y cm) -- (-1pt,\y cm) node[anchor=east] {$\chr{b}$};
\foreach \y in {3.5}
  \draw (1pt,\y cm) -- (-1pt,\y cm) node[anchor=east] {$\chr{c}$};
\foreach \y in {4.5}
  \draw (1pt,\y cm) -- (-1pt,\y cm) node[anchor=east] {$\chr{d}$};
\foreach \y in {5.5}
  \draw (1pt,\y cm) -- (-1pt,\y cm) node[anchor=east] {$\chr{d}$};
\foreach \y in {6.5}
  \draw (1pt,\y cm) -- (-1pt,\y cm) node[anchor=east] {$(  y  )$};
\foreach \x in {0.5}
  \draw (\x cm,1pt) -- (\x cm,-1pt) node[anchor=north] {$\chr{a}$};
\foreach \x in {1.5}
  \draw (\x cm,1pt) -- (\x cm,-1pt) node[anchor=north] {$\chr{a}$};
\foreach \x in {2.5}
  \draw (\x cm,1pt) -- (\x cm,-1pt) node[anchor=north] {$\chr{a}$};
\foreach \x in {3.5}
  \draw (\x cm,1pt) -- (\x cm,-1pt) node[anchor=north] {$\chr{b}$};
\foreach \x in {4.5}
  \draw (\x cm,1pt) -- (\x cm,-1pt) node[anchor=north] {$\chr{b}$};
\foreach \x in {5.5}
  \draw (\x cm,1pt) -- (\x cm,-1pt) node[anchor=north] {$\chr{b}$};
\foreach \x in {6.5}
  \draw (\x cm,1pt) -- (\x cm,-1pt) node[anchor=north] {$\chr{b}$};
\foreach \x in {7.5}
  \draw (\x cm,1pt) -- (\x cm,-1pt) node[anchor=north] {$\chr{d}$};
\foreach \x in {8.5}
  \draw (\x cm,1pt) -- (\x cm,-1pt) node[anchor=north] {$\chr{d}$};
\foreach \x in {9.5}
  \draw (\x cm,1pt) -- (\x cm,-1pt) node[anchor=north] {$\chr{d}$};
\foreach \x in {10.5}
  \draw (\x cm,1pt) -- (\x cm,-1pt) node[anchor=north] {$(  x  )$};
\end{tikzpicture}
\caption{An optimal full path of the order $(10, 6)$ whose cost equals $1$}
\label{figure:DTWexample1}
\end{figure}

\def\myhlen#1{\lambda_h(#1)}
\def\myvlen#1{\lambda_v(#1)}
We use $\myran{n_1,n_2}$ for the set $\myset{n_1,n_1+1,\ldots,n_2}$
consisting of all the integers between $n_1$ and $n_2$, inclusive, and
use $\myran{n}$ as a shorthand for $\myran{1,n}$.  We use $T_{m,n}$
for a table consisting of $m$ columns and $n$ rows and $T_{m,n}[i,j]$
for the entry on the $i$-th column and $j$-th row, where
$(i,j)\in\myran{m}\times\myran{n}$ is assumed. We may use $T$ for
$T_{m,n}$ if $m$ and $n$ can be readily inferred from the
context. Please note that an entry $T_{m, n}[i, j]$ in a table should
be distinguished from the value stored in the entry---when discussing
the value, we shall refer to it as the content of the entry $T_{m,n}[i, j]$.

\noindent{\bf Letters.}~\kern6pt
Let us assume an alphabet $\letters$, which is possibly infinite. We
use $\mydist$ for a distance function on letters such that
$\delta(a,a)=0$ for any $a \in \letters$. We do not require that
$\delta$ be symmetric or the triangular inequality
$\delta(a,c)\leq\delta(a,b)+\delta(b,c)$ hold for $\delta$.

\noindent{\bf Strings.}~\kern6pt
We use $x$ and $y$ for strings.  We write $x=(a_1,\ldots,a_m)$ for a
string consisting of $m$ letters such that $x[i]$ (often written as
$x_i$), the $i$-th letter in $x$, is $a_i$ for each $i\in\myran{m}$.
We use $\myrun{a}{n}$ for a string of $n$ occurrences of $a$, which is
also referred to as a {\it run} of a, and $\myrle{x}$ for a run-length
encoded (RLE) string, which consists of a sequence of runs.  We use
$\mylen{\myrle{x}}$ and $\myrlen{\myrle{x}}$ for the length and
r-length of $\myrle{x}$, which are $m_1+\cdots+m_k$ and $k$,
respectively, in the case
$\myrle{x}=(\myrun{a'_1}{m_1},\ldots,\myrun{a'_k}{m_k})$.

A run in a string $x$ is maximal if it is not contained in a longer
run in $x$. There is a unique run-length encoding $\myrle{x}$ of $x$
that consists of only maximal runs in $x$, and this encoding
$\myrle{x}$ is referred to as the RLE representation of $x$. We
also use $\myrlen{x}$ for the number of maximal runs in $x$ (and
thus $\myrlen{x}=\myrlen{\myrle{x}}$).

We use $p$ for points, which are just integer pairs.
\begin%
{definition}
Given a point $p_1=(i_1,j_1)$, another point
$p_2=(i_2,j_2)$ is a successor of $p_1$ if (1) $i_2=i_1+1$ and
$j_2=j_1$, or (2) $i_2=i_1$ and $j_2=j_1+1$, or (3) $i_2=i_1+1$ and
$j_2=j_1+1$.
\end{definition}

\begin%
{definition}
Let $P=\pathseq{p_1,\ldots,p_R}$ be a sequence of points such that
$p_r\in\myran{m}\times\myran{n}$ holds for each $1\leq r\leq R$. We
call $P$ a path of order $({m},{n})$ if $p_{r+1}$ is a successor of
$p_{r}$ for each $1\leq r < R$ (and this $P$ is sometimes also called
a ``warping path''~\cite{DBLP:journals/corr/abs-1903-03003}). Also, we
refer to a path of length $2$ as a step that connects a point to one
of its successors.
\end{definition}
We use $\WP(m,n)$ for the set of paths of order $(m,n)$.  A path
$P_1\in\WP(m,n)$ is a subpath of another path $P_2\in\WP(m,n)$ if
$P_1$ is contained in $P_2$ (as a consecutive segment).

\begin%
{definition}
Let $P_1$ and $P_2$ be two non-empty paths.  We use $P_1\pequ P_2$ to
mean that $P_1$ and $P_2$ begin at the same point and end at the same
point.
\end{definition}

\begin%
{definition}
Let $P_1$ and $P_2$ be two paths such that the first point of $P_2$,
if it exists, is the successor of the last point of $P_1$, if it
exists.  We write $P_1\pjuxt{P_2}$ to mean the concatenation of $P_1$
and $P_2$ (as sequences of points) that forms a path containing both
$P_1$ and $P_2$ as its subpaths.  In the case where both $P_1$ and
$P_2$ are non-empty, there is a step in $P_1\pjuxt{P_2}$ connecting
$P_1$ and $P_2$ that consists of the last point in $P_1$ and the first
point in $P_2$.
\end{definition}
Also, we write $P_1\pjoin{P_2}$ to mean $P_1\pjuxt{P'_2}$ where the
last point of $P_1$ is assumed to be the first point of $P_2$ and
$P'_2$ is the tail of $P_2$, that is, $P'_2$ is obtained from removing
the first point in $P_2$.  In other words, $P_1\pjuxt{P_2}$ implies
that $P_1$ and $P_2$ share no point while $P_1\pjoin{P_2}$ implies
that $P_1$ and $P_2$ share one point, which is the last point of $P_1$
and the first point of $P_2$.

\begin%
{definition}
Let $x=(a_1,\ldots,a_m)$ and $y=(b_1,\ldots,b_n)$ be two
strings.  For each path $P\in\WP(m,n)$, there is a value
$\CC_{x,y}(P)=\Sigma_{r=1}^{R}\mydist(a_{i_r},b_{j_r})$, where $P$
equals $((i_1,j_1),\ldots,(i_R,j_R))$. This value is often referred to
as the cost of $P$. We may write $\CC(P)$ for $\CC_{x,y}(P)$ if it is
clear from the context what $x$ and $y$ should be.
\end{definition}

We call each $P\in\WP(m,n)$ a full path if $(i_1,j_1)=(1,1)$ and
$(i_R,j_R)=(m,n)$. The DTW distance between $x$ and $y$, denoted by
$\DTW(x, y)$, is formally defined as the minimum of $\CC_{x,y}(P)$,
where $P$ ranges over the set of full paths of order $(m, n)$. Also, a
full path $P$ is referred to as an optimal full path if
$\CC_{x,y}(P)=\DTW(x,y)$.  Given there are only finitely many paths of
order $(m, n)$, there must exist one full path that is optimal.
As an example, the shaded squares in Figure~\ref{figure:DTWexample1}
illustrate the following full path of the order $(10, 6)$:
\[
\pathseq{(1,1),(2,2),(3,2),(4,3),(5,3),(6,3),(7,4),(8,5),(9,6),(10,6)}
\]
where the number in each square is the assumed distance between the
two corresponding letters (computed here as the difference between
their positions in the alphabet).

\begin%
{definition}
Let $P=\pathseq{(i_1,j_1),\ldots,(i_R,j_R)}$.
\begin%
{enumerate}
\item
$P$ is a v-path if all the $i_r$ are the same for $1\leq r\leq R$.
\item  
$P$ is a h-path if all the $j_r$ are the same for $1\leq r\leq R$.
\item  
$P$ is a d-path if $i_{r+1}=i_r+1$ and $j_{r+1}=j_r+1$ for $1\leq r < R$.
\end{enumerate}
Please recall that a step is a path of length $2$.  If a step is a
h-path/v-path/d-path, respectively, then it is a h-step/v-step/d-step,
respectively.
\end{definition}

\begin%
{definition}
Let $x=(a_1,\ldots,a_m)$ and $y=(b_1,\ldots,b_n)$ be two strings.  We
use $\TDTW(x,y)$ for the table $T_{m,n}$ such that the content of
$T_{m,n}[i,j]$ is $\mydist(a_i,b_j)$ for each $i\in\myran{m}$ and
$j\in\myran{n}$.
\end{definition}
We may use $\TDTW$ for $\TDTW(x,y)$ if $x$ and $y$ can be readily
inferred from the context.  If a table $\TDTW$ can be readily inferred
from the context, we often associate a point $(i,j)$ with the entry
$\TDTW[i,j]$ and think of a path
$P=\pathseq{(i_1,j_1),\ldots,(i_R,j_R)}$ as the sequence of entries
$\TDTW[i_r,j_r]$ for $1\leq r\leq R$.  As an example, a full path of
the order $(10,6)$ is given in Figure~\ref{figure:DTWexample1}, where
the path is indicated with the 10 shaded entries.

Suppose that the $i$th run ($j$th) in $x$ ($y$) consists of the
letters in $x$ ($y$) from position $i_1$ ($j_1$) to position $i_2$
($j_2$), inclusive. Then there is a corresponding block $\BB_{i,j}$
consisting of all the entries $\TDTW[u,v]$ for $i_1\leq u\leq i_2$ and
$j_1\leq v\leq j_2$.
Finally, we introduce notation for discussing specific blocks:

\begin%
{definition}\label{define:beta-h-v}
If there exists a block $B$ to the right of $\BB_{i,j}$ such that
$\delta(B) < \delta(\BB_{i,j})$, we use $\beta_h(\BB_{i,j})$ for such
a $B$ that is the closest to $\BB_{i,j}$.  In other words,
$\beta_h(\BB_{i,j})$ is $\BB_{i',j}$ for the least $i'$ satisfying $i
< i'$ and $\delta(\BB_{i',j}) < \delta(\BB_{i,j})$.  Similarly, if
there exists a block $B$ above $\BB_{i,j}$ such that $\delta(B) <
\delta(\BB_{i,j})$, then $\beta_v(\BB_{i,j})$ is $\BB_{i,j'}$ for the
least $j'$ satisfying $j < j'$ and $\delta(\BB_{i,j'}) <
\delta(\BB_{i,j})$.
\end{definition}

\subsection
{Computing DTW Distance with Graphs}
\label{subsection:DTW_with_graphs}
It is well known~\cite{vintzyuk1968} that one can turn the problem of
computing $\DTW(x,y)$ for two given strings $x$ and $y$ into a problem
of finding the shortest distance between two given vertices in some graph,
as follows.

Let $x=(a_1,\ldots,a_m)$ and $y=(b_1,\ldots,b_n)$.  We can construct a
directed graph $G_0=\pairg{V_0,E_0}$ such that
\begin%
{enumerate}
\item
there is a vertex $\vrtx_{i,j}\in V_0$ for each pair
$(i,j)\in\myran{m}\times\myran{n}$, and
\item
there is a directed edge $\edge(\vrtx_{i_1,j_1},\vrtx_{i_2,j_2})$ of length
$\delta(a_{i_1},b_{j_1})$ connecting $\vrtx_{i_1,j_1}$ to
$\vrtx_{i_2,j_2}$ whenever $(i_2,j_2)$ is a successor of $(i_1,j_1)$.
\end{enumerate}
We use $\GDTW(x,y)$ for this graph $G_0$ and use $v$ and $e$ to range
over $V_0$ and $E_0$, respectively.  We may also refer to each vertex
$\vrtx_{i,j}\in V_0$ simply as point $(i,j)$ if there is no risk of
confusion.  Clearly, $\mylen{V_0}$, the size of $V_0$, is $mn$, and
$\mylen{E_0}$, the size of $E_0$, is bounded by $3mn$ (since each point
can have at most $3$ successors).

As every warping path is naturally mapped to a path in the graph
$\GDTW(x,y)$ and vice versa, we can use $P$ to range over both warping
paths in $\TDTW(x,y)$ and paths in $\GDTW(x,y)$ without risking
confusion. Given a (non-empty) warping path $P$ in $\TDTW(x,y)$, we
use $\LL(P)$ for the length of the corresponding path of $P$ in
$\GDTW(x,y)$, which equals the cost of $P$ minus the cost associated
with the last point in $P$. Therefore, finding the value of
$\DTW(x,y)$ is equivalent to finding the shortest distance from
$\vrtx_{1,1}$ to $\vrtx_{m,n}$, which can be done by running some
version of Dijkstra's shortest distance algorithm.  Alternatively,
since $\GDTW(x, y)$ is acyclic, one can use dynamic programming to
find the shortest distance, in which case the running time becomes
$\Oreg(mn)$. This yields the classic dynamic-programming solution for
computing $\DTW(x, y)$~\cite{vintzyuk1968}.

The basic strategy that we use in this paper to design a DTW
approximation algorithm can be outlined as follows. Let
$G_0=\pairg{V_0,E_0}$ be the graph $\GDTW(x,y)$ given above.  We try
to construct a graph $G=\pairg{V,E}$ such that $V\subseteq V_0$ holds
and the length of each edge $e$ in $E$ that connects a vertex $v_1$ to
another vertex $v_2$ equals the shortest distance from $v_1$ to $v_2$
as is defined in $G_0$. Let $\DD_0$ and $\DD$ be the shortest distance
functions on the graphs $G_0$ and $G$, respectively. We attempt to
prove that
\[
\DD_0(\vrtx_{1,1},\vrtx_{m,n})\leq\DD(\vrtx_{1,1}, \vrtx_{m,n})\leq\alpha\cdot{\DD_0(\vrtx_{1,1},\vrtx_{m,n})}
\]
for some approximation ratio $\alpha>1$ (e.g., $\alpha=1+\eps$ for
$\eps>0$).  By running a shortest-path algorithm on $G$, we are able
to compute $\DD(\vrtx_{1,1}, \vrtx_{m,n})$ and thus obtain an
$\alpha$-approximation algorithm for $\DTW(x,y)$.  As the time
complexity of such an algorithm can be bounded by $\Oreg(\mylen{E})$
plus the time needed for constructing $G$, the key to finding a fast
algorithm is try to minimize $\mylen{E}$, the size of $E$ (while
ensuring that the construction of $G$ can be done in
$\Oreg(\mylen{E})$ time).

\section%
{A (1+$\eps$)-Approximation Algorithm for DTW}
\label{section:ApproxDTW2}

\def\nbeta{\beta^{*}}


In this section, we present and analyze a $(1
+ \epsilon)$-approximation algorithm for approximating the DTW
distance between two run-length encoded strings in near-quadratic
time.

Let $x=(a_1,\ldots,a_m)$ and $y=(b_1,\ldots,b_n)$ be two non-empty
strings. Following Section~\ref{subsection:DTW_with_graphs}, our
approach will be to construct a graph $G=\pairg{V,E}$ based on
$\GDTW(x,y)$, reducing the problem of computing a
$(1+\eps)$-approximation of $\DTW(x,y)$ to finding the shortest
distance between the vertex $\vrtx_{1,1}$ and the vertex $\vrtx_{m,n}$
in $G$.

We begin by describing the notions of h-to-v paths and v-to-h
paths. The role that these will play in our algorithm is that we will
show how to decompose any full path $P$ into a concatenation $P_1 +
P_2 + \cdots$ of h-to-v and v-to-h paths.

\begin%
{definition}
Let $x$ and $y$ be two non-empty strings. A horizontal-to-vertical (h-to-v) path in $\TDTW(x,y)$ is a path that
connects a point on the lower boundary of a block $B_{i,j_1}$ to
another point on the right boundary of $B_{i,j_2}$.  An h-to-v
component in a full path is a maximal h-to-v path that is not
contained in any longer h-to-v path in the same full path.  A
vertical-to-horizontal (v-to-h) path can be defined similarly.
\end{definition}

Note that an h-to-v path matches characters from a \emph{single run}
in $x$ to characters from (possibly multiple) runs in $y$. We now make
the (standard) observation that, when we are comparing a single run to
characters to a multi-run string, DTW behaves in a very natural way:

\begin%
{observation}
\label{obs:run-path-dtw}
Let $x=(a_1,\ldots,a_m)$ and $y=(b_1,\ldots,b_n)$ be two non-empty
strings.  Assume that $x$ is a run of some letter $a_0$, that is,
$a_0=a_i$ for $1\leq i\leq m$.
\begin%
{enumerate}
\item
If $m\leq n$, then we have:
$\DTW(x, y)=\Sigma_{j=1}^{n}\delta(a_0,b_j)$.
This case corresponds to a path of the form $D\pjoin{U}$,
where $D$ consists of only d-steps and $U$ only v-steps.
\item
If $m\geq n$, then we have:
$\DTW(x,y)=\Sigma_{j=1}^{n}\delta(a_0,b_j)+(m-n)\cdot\delta(a_0,b_0)$,
where $b_0$
is some $b_j$ closest to $a_0$, that is, $\delta(a_0,b_0)$
equals the minimum of $\delta(a_0,b_j)$ for $1\leq j\leq n$.
This case corresponds to a path of the form $D_1\pjoin{H}\pjoin{D_2}$,
where $D_1$ and $D_2$ consist of only d-steps and $H$ only h-steps. It should be further noted that, in this case, $H$ can be assumed to travel along the lower boundary of some block, without loss of generality.
\end{enumerate}
\end{observation}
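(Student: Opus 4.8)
The plan is to carry out the entire argument inside the grid $\TDTW(x,y)$, using the fact that since $x$ is a run of the single letter $a_0$, the cost of every cell $(i,j)$ equals $\delta(a_i,b_j)=\delta(a_0,b_j)$ and hence depends only on the row index $j$. Consequently, for any full path $P=\pathseq{p_1,\dots,p_R}$, if we let $c_j$ be the number of indices $r$ for which $p_r$ lies in row $j$, then $\CC(P)=\sum_{j=1}^{n}c_j\,\delta(a_0,b_j)$ and $\sum_{j=1}^{n}c_j=R$. Two elementary observations about any full path pin down the possible $c_j$: the second coordinate along $P$ starts at $1$, ends at $n$, and changes by $0$ or $1$ per step, so it attains every value in $[n]$, giving $c_j\ge 1$ for all $j$; and likewise the first coordinate attains every value in $[m]$, so $P$ has at least one point in each of the $m$ columns, i.e.\ $R\ge m$.

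For the lower bound on $\DTW(x,y)$ I would argue as follows. If $m\le n$, then from $c_j\ge 1$ and nonnegativity of $\delta$ we get $\CC(P)\ge\sum_{j=1}^{n}\delta(a_0,b_j)$ for every full path $P$. If $m\ge n$, write $\CC(P)=\sum_{j=1}^{n}\delta(a_0,b_j)+\sum_{j=1}^{n}(c_j-1)\delta(a_0,b_j)$; since each $c_j-1\ge 0$ and $\delta(a_0,b_j)\ge\delta(a_0,b_0)=\min_{1\le j\le n}\delta(a_0,b_j)$, the second sum is at least $\bigl(\sum_j c_j-n\bigr)\delta(a_0,b_0)=(R-n)\,\delta(a_0,b_0)\ge(m-n)\,\delta(a_0,b_0)$, using $R\ge m$. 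This proves $\DTW(x,y)$ is at least the claimed value in each case.

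For the matching upper bound I would exhibit an explicit full path of the stated shape. When $m\le n$, take $D$ to be the d-path from $(1,1)$ to $(m,m)$ and $U$ the v-path from $(m,m)$ to $(m,n)$; the path $D\pjoin U$ has exactly one point in each row $1,\dots,n$, so its cost is exactly $\sum_{j=1}^{n}\delta(a_0,b_j)$. When $m\ge n$, let $j_0$ realize $\delta(a_0,b_{j_0})=\min_j\delta(a_0,b_j)$, take $D_1$ to be the d-path from $(1,1)$ to $(j_0,j_0)$, $H$ the h-path from $(j_0,j_0)$ to $(j_0+(m-n),j_0)$, and $D_2$ the d-path from $(j_0+(m-n),j_0)$ to $(m,n)$; one checks $D_1\pjoin H\pjoin D_2$ is a genuine full path with exactly one point in each row $j\ne j_0$ and exactly $m-n+1$ points in row $j_0$, so its cost is exactly $\sum_{j=1}^{n}\delta(a_0,b_j)+(m-n)\,\delta(a_0,b_0)$. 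Finally, since $\delta(a_0,\cdot)$ is constant along each maximal run of $y$, the index $j_0$ may be chosen as the first position of the maximal run of $y$ containing a minimizer, in which case $H$ travels along the lower boundary of the corresponding block $\BB_{i,j}$; this yields the ``without loss of generality'' claim. All of the calculations are routine; the only point needing slight care is the $m\ge n$ lower bound, where one must see that a full path is forced to spend at least $m-n$ ``extra'' points and that it is optimal to spend them in a cheapest row --- precisely the content of the identities $\sum_j c_j=R$, $R\ge m$, and $c_j\ge 1$.
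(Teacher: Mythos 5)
Your proof is correct and complete. The paper itself states this as a ``(standard) observation'' and offers no proof, so there is nothing to diverge from; your row-counting argument --- writing $\CC(P)=\sum_{j}c_j\,\delta(a_0,b_j)$ with $c_j\ge 1$ and $\sum_j c_j\ge m$ for the lower bound, and exhibiting the explicit paths $D\pjoin U$ and $D_1\pjoin H\pjoin D_2$ for the upper bound --- is exactly the canonical way to establish it, and your final remark about placing $H$ on the first row of the minimizing run correctly accounts for the ``lower boundary'' clause.
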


We say that a path connecting $p$ and $q$ is optimal if its cost is
the least among all the paths connecting $p$ and $q$.  By merging the
two cases in Observation~\ref{obs:run-path-dtw}, we can assume that each
optimal h-to-v path is of the form $D_1\pjoin{H}\pjoin{D_2}\pjoin{U}$,
where any of the four sub-components can vanish. From now on, we can
use a 5-tuple $(p_1, q_1, q_2, p_2, p_3)$ (which may also be written as
$p_1q_1q_2p_2p_3$) to refer to an h-to-v path, where $p_1q_1$ is $D_1$,
$q_1q_2$ is $H$, $q_2p_2$ is $D_2$, and $p_2p_3$ is $U$.  Similarly,
each optimal v-to-h path is of the form
$D_1\pjoin{U}\pjoin{D_2}\pjoin{H}$, and we use a corresponding 5-tuple
representation to refer to a v-to-h path as well.

\begin%
{figure}[t]
\centering
\begin%
{tikzpicture}[scale=1]
\node[label=left:{$p_1$}](p1) at (0.00,0.00) {};
\node[label=left:{$p_2$}](p2) at (4.00,3.00) {};
\node[label=left:{$p_3$}](p3) at (4.00,3.50) {};
\node[label=above:{$q_1$}](p3) at (1.00,1.00) {};
\node[label=above:{$q_2$}](p4) at (2.00,1.00) {};
\node[label=right:{$p'_3=\mynear(p_3)$}](p5) at (4.00,3.85) {};
\draw[thick](0.00,0.00)--(1.00,1.00)--(2.00,1.00)--(4.00,3.00)--(4.00,3.50);
\node[label=right:{$p'_1$}](q1) at (0.25,0.00) {};
\node[label=right:{$p'_2=\mynear(p_2)$}](q2) at (4.00,3.25) {};
\node[label=below:{$q'_1$}](q3) at (1.25,1.00) {};
\node[label=right:{$q'_2$}](q4) at (2.35,1.00) {};
\node[label=right:{$q'_3$}](q5) at (4.00,2.65) {};
\draw[dashed,thick](0.25,0.00)--(1.25,1.00)--(2.35,1.00)--(4.00,2.65)--(4.00,3.25)--(4.00,3.85);
\end{tikzpicture}
\caption{For illustrating h-to-v path approximation}
\label{figure:h-to-v-path-approximation}
\end{figure}
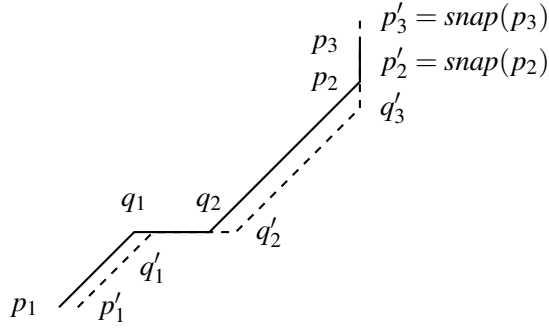

Next we argue that any full path $P_0$ can be decomposed into h-to-v and v-to-h paths.

\begin%
{lemma}
\label{lemma:run-path-seq}
Let $x$ and $y$ be two non-empty strings.  Given a full path $P_0$ in
$\TDTW(x,y)$, we have $P_0 = P_1 + \cdots + P_R$ where $P_r$ is an
h-to-v path for each odd $1\leq r\leq R$ and $P_r$ is a v-to-h path for each even
$1\leq r\leq R$.
\end{lemma}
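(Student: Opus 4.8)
The plan is to decompose $P_0$ greedily, walking along the path from $(1,1)$ and carving off maximal h-to-v and v-to-h components in alternation. The key structural fact I would isolate first is a simple parity/boundary observation: every time the path $P_0$ crosses from one block into a horizontally-adjacent block it does so through a vertical boundary, and every time it crosses into a vertically-adjacent block it does so through a horizontal boundary. An h-to-v path, by definition, stays within a single column of blocks (a fixed run of $x$) and moves from a lower horizontal boundary to a right vertical boundary; a v-to-h path stays within a single row of blocks and moves from a left vertical boundary to an upper horizontal boundary. So the decomposition should be driven by the columns-of-blocks (runs of $x$) and rows-of-blocks (runs of $y$) that $P_0$ passes through.

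First I would set up the decomposition procedure precisely. Start at $p_1 = (1,1)$, which lies on both a lower horizontal boundary and a left vertical boundary of $\BB_{1,1}$. Let $P_1$ be the maximal prefix of $P_0$ that stays inside the v-block segment $\BB_{1,[1,j]}$ for the largest possible $j$ — equivalently, the prefix that ends the last time $P_0$ is about to take an h-step that would leave run $1$ of $x$. Its endpoint $q_1$ then lies on the right vertical boundary of some $\BB_{1,j_2}$, so $P_1$ is an h-to-v path. The next point after $q_1$ is a successor of $q_1$ obtained by an h-step or d-step, landing on the left vertical boundary of $\BB_{2, j}$ for the appropriate $j$; from there I carve off $P_2$ as the maximal piece staying in the h-block segment $\BB_{[i],j_2'}$ for one fixed row of $y$, which ends on an upper horizontal boundary and hence is a v-to-h path. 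Then I repeat, alternating. I would make this an explicit induction on the suffix of $P_0$ remaining to be processed, maintaining the invariant that the current start point lies on the appropriate boundary type (horizontal-and-left for an h-to-v piece, vertical-and-bottom for a v-to-h piece).

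The two things I need to check carefully are: (1) the alternation really forces the claimed parity, i.e. the endpoint of an h-to-v component, together with the single connecting step to the next component, always lands on a boundary from which a v-to-h component legitimately begins, and symmetrically; and (2) the procedure terminates with the last component ending exactly at $(n,n)$. For (1), the point is that a maximal h-to-v component ends precisely when the path is forced out of its current run of $x$, which happens via a step that crosses a vertical boundary into the next block column — so the new start point is on a left vertical boundary, which is where a v-to-h component is allowed to start; and dually for v-to-h. (There is a small edge case when the path immediately takes a step of the "wrong" direction so that a component has length one, or when $P_0$ never leaves run $1$ of $x$ at all; these are handled by allowing sub-components of a component — the $D_1, H, D_2, U$ pieces — to be empty, and by letting $R = 1$.) For (2), $(n,n)$ is the last column of the last run of $x$ and the last row of the last run of $y$, so it sits on a right vertical boundary and an upper horizontal boundary simultaneously, and the greedy procedure will stop there regardless of whether the final component is classified as h-to-v or v-to-h.

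**The main obstacle** I expect is handling the boundary bookkeeping cleanly — in particular, points that lie on two boundaries at once (intersection points, and the corners $(1,1)$ and $(n,n)$), and deciding unambiguously where one component ends and the next begins when the connecting step is diagonal and could be "charged" to either side. I would resolve this by fixing a convention up front (e.g., a maximal h-to-v component is extended as long as the next step does not increase the $x$-run index, so that the component "owns" all diagonal steps internal to its block column, and the connecting step is the first step that advances to a new $x$-run), and then the alternation and the parity of $r$ follow mechanically. Everything else is routine induction, so I would keep the write-up short: state the greedy rule, state the loop invariant, verify it is preserved across one h-to-v/v-to-h pair, and conclude.
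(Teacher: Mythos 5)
Your proposal is correct and follows essentially the same route as the paper: greedily peel off a maximal h-to-v path starting from $(1,1)$ (which lies on a lower boundary), observe that the point immediately after its endpoint must lie on a left vertical boundary, peel off a maximal v-to-h path from there, and alternate until $(m,n)$ is reached. The paper's own proof is just a terser statement of this same alternation invariant; your extra care about intersection points, diagonal connecting steps, and the degenerate $R=1$ case is consistent with (and slightly more explicit than) what the paper writes.
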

\begin%
{proof}
We defer the proof to Appendix~\ref{appendix:ApproxDTW2}.
\end{proof}

Given two non-empty strings $x=(a_1,\ldots,a_m)$ and
$y=(b_1,\ldots,b_n)$, we outline as follows a strategy for
approximating $\DTW(x,y)$. Let $P_0$ be an optimal full path on
$\TDTW(x,y)$ such that $\CC(P_0)=\DTW(x,y)$.  By
Lemma~\ref{lemma:run-path-seq}, we have $P_0 = P_1 + \cdots + P_R$,
where $P_1$ is an h-to-v path and $P_1,\ldots,P_R$ are a sequence of
alternating h-to-v paths and v-to-h paths.  Let us choose an h-to-v
path $P_r$ for some $1\leq r\leq R$.  By
Observation~\ref{obs:run-path-dtw}, we can assume that $P_r$ is of the
form of solid lines depicted in
Figure~\ref{figure:h-to-v-path-approximation}.%
\footnote{The meaning of the dashed lines in the figure is to be explained later.}

In more detail, the path $P_r$ moves diagonally from a point $p_1$ on the lower
boundary of a block $B_1$ until it meets the lower boundary of another
block; it moves horizontally along that lower boundary for some distance; it then moves diagonally to reach a point $p_2$ on the right boundary of another block $B_2$ (which is
either $B_1$ or sits above $B_1$); and finally it moves vertically to reach a point $p_3$ on the right boundary of another block $B_3$ (which is
either $B_2$ or sits above $B_2$). Note that the horizontal moves
contained in $P_r$ must be inside a block where those moves cost the
least.\footnote{If there are several blocks in which such horizontal
moves can take place, we simply assume that the moves are inside the
lowest of these blocks.}

Let $G_0=\pairg{V_0,E_0}$ be the graph $\GDTW(x,y)$ described in
Section~\ref{subsection:DTW_with_graphs} for computing $\DTW(x,y)$.
We may use a point (that is, an integer pair) to refer to the
corresponding vertex in $V_0$. We may also use a path $P$ in
$\TDTW(x,y)$ to denote its counterpart in $G_0$. 

Given $\eps>0$, we will construct a graph $G=\pairg{V,E}$ such that:
\begin%
{itemize}
\item
The point $(1,1)$ is in $V$ and $V\subseteq V_0$ holds.
\item
Every point in $V$ is on a boundary. Each point in $V$ that is on either a right or upper boundary is connected by an edge to the next snap point on the same boundary (if there is one).
\item
If there is a step (either an h-step, v-step, or d-step) connecting two boundary points $p_1$ and $p_2$ in
$E_0$, then there is also a step connecting $\mynear(p_1)$ and
$\mynear(p_2)$ in $E$, where $\mynear(p_1)$ (resp. $\mynear(p_2)$) is
the nearest point in $V$ above or to the right of $p_1$ (resp. $p_2$) on the same boundary as $p_1$.
\item
For each h-to-v (resp. v-to-h) path $P$ from $p_1$ to $p_2$ (depicted
by some solid lines in Figure~\ref{figure:h-to-v-path-approximation})
in $G_0$ and any point $p'_1$ in $G$ to the right of $p_1$
(resp. above $p_1$) such that $p_1$ and $p'_1$ are on the same block
boundary, there exists a path $P'$ (depicted by some dashed lines in
Figure~\ref{figure:h-to-v-path-approximation}) in $G$ connecting
$p'_1$ and the point $p'_2=\mynear(p_2)$ in $G$ such that $\DD(P')\leq
(1+\eps)\DD_0(P)$, where $\DD$ and $\DD_0$ are the shortest distance
functions on the graphs $G$ and $G_0$, respectively.
\end{itemize}

We remark that our construction of $G$ will repeatedly make use of the
following basic fact.
\def\floor#1{\lfloor{#1}\rfloor}
\begin%
{observation}\label{obs:Delta}
Let
$\Delta(t)=\floor{(1+\eps)^{t}}$
for integers $t\geq 0$.
For each integer $d\geq 1$,
we have
a $(1+\eps)$-approximation of $d$
that is of the form $\Delta(t)$.
In other words,
$d\leq\Delta(t)\leq(1+\eps)\cdot{d}$ holds for some $t$.
\end{observation}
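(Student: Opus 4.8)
The plan is to exhibit the required exponent $t$ explicitly rather than search for it. I would define $t^\ast$ to be the \emph{smallest} nonnegative integer with $(1+\eps)^{t^\ast}\geq d$. This is well-defined: since $d\geq 1$ we have $(1+\eps)^0=1\leq d$, and since $\eps>0$ the quantity $(1+\eps)^t$ grows without bound, so the set of valid exponents is nonempty and has a least element. The claim is then that $t=t^\ast$ witnesses the observation.

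For the lower bound $d\leq\Delta(t^\ast)$, I would use that $(1+\eps)^{t^\ast}\geq d$ together with the fact that $d$ is a (positive) integer: applying the monotone floor function gives $\Delta(t^\ast)=\floor{(1+\eps)^{t^\ast}}\geq\floor{d}=d$. For the upper bound $\Delta(t^\ast)\leq(1+\eps)\cdot d$, I would split into two cases. If $t^\ast=0$, then from $1=(1+\eps)^0\geq d\geq 1$ we get $d=1$, and $\Delta(0)=\floor{1}=1=d\leq(1+\eps)d$. If $t^\ast\geq 1$, then minimality of $t^\ast$ forces $(1+\eps)^{t^\ast-1}<d$, and multiplying by $(1+\eps)$ yields $(1+\eps)^{t^\ast}<(1+\eps)d$, so $\Delta(t^\ast)=\floor{(1+\eps)^{t^\ast}}\leq(1+\eps)^{t^\ast}<(1+\eps)d$. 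Combining the two bounds gives $d\leq\Delta(t^\ast)\leq(1+\eps)d$, as desired.

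There is no genuine obstacle here; this is a routine calculation. The only two places that call for a moment's care are (i) the step where taking floors preserves the inequality $\geq d$ — this works precisely because $d$ is an integer, so rounding down cannot drop below $d$ — and (ii) the boundary case $t^\ast=0$, where the argument "look at exponent $t^\ast-1$" is unavailable and must be handled separately (it coincides with $d=1$). I would keep the write-up to a few lines, emphasizing these two points.
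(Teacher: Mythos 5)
Your proof is correct, and it is the standard argument one would supply: the paper states this observation without proof, treating it as immediate. Your choice of the minimal exponent $t^\ast$ with $(1+\eps)^{t^\ast}\geq d$, the use of the integrality of $d$ to pass through the floor, and the separate handling of $t^\ast=0$ are exactly the right points of care, and nothing is missing.
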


We now describe how to construct the graph $G$. We construct the set $V$ of vertices as follows:
\begin
{enumerate}
\item
Each vertex in $G_0$ corresponding to a corner point in
$\TDTW(x,y)$ should be added into $V$. There are at most $4k\ell$ such
vertices, where $k=\myrlen{x}$ and $\ell=\myrlen{y}$.
\item
Assume
$(i, j)$ is the lower-left corner of block $B$.
\begin%
{itemize}
\item
If a point
$(i+1+\Delta(t), j)$
is on the lower boundary of $B$
for some $t\geq 0$,
then this point should be added into $V$.
There are at most $\log_{1+\eps}(m)$ such points for the block $B$.
\item
If a point
$(i, j+1+\Delta(t))$
is on the left boundary of $B$
for some $t\geq 0$,
then this point should be added into $V$.
There are at most $\log_{1+\eps}(n)$ such points for the block $B$.
\end{itemize}
\item
Assume
$(i, j)$ is the upper-left corner of block $B$.
If a point
$(i+\Delta(t), j)$ is on the upper boundary of $B$
for some $t\geq 0$,
then this point should be added into $V$.
There are at most $\log_{1+\eps}(m)$ such points for the block $B$.
\item
Assume
$(i, j)$ is the lower-right corner of block $B$.
If a point
$(i, j+\Delta(t))$ is on the right boundary of $B$
for some $t\geq 0$,
then this point should be added into $V$.
There are at most $\log_{1+\eps}(n)$ such points for the block $B$.
\end{enumerate}
The points in $V$ are referred to as \defn{snap points}.  Given a snap
point $p'$ on the upper or right boundary of some block, if $p'q'$ is
a d-step for some point $q'$, then $q'$ is also a snap point. This can
be readily verified by inspecting the construction of $V$.  Also, for
each block $B$, there are at most $\Oreg(\log(m+n)/\eps)$ points added
to $V$. Therefore, $\mylen{V}$, the size of $V$, is
$\Oreg(k\ell\cdot\log(m+n)/\eps)$ or simply $\Otld(k\ell/\eps)$.

\begin%
{definition}
Given a point $p\in V_0$ on a horizontal
boundary of a block $B$, we use $\mynear_h(p)$
for the point $p'\in V$ such that $p'$ is $p$ if $p\in V$ or $p'$ is
the closest point to the right of $p$ that is on the same boundary of $B$.
The existence of such a point is guaranteed
as all of the corner points are included in $V$.
Similarly, $\mynear_v(p)$ can be defined for each point $p$ on a vertical boundary
of a block.
\end{definition}
We can use $\mynear(p)$ for either $\mynear_h(p)$ or $\mynear_v(p)$ without
confusion: If both $\mynear_h(p)$ and $\mynear_v(p)$ are defined for $p$,
then $p$ must be a corner of some block $B$, implying
$p=\mynear_h(p)=\mynear_v(p)$ since $p\in V$ holds.  We argue as follows
that finding $\mynear(p)$ for each given $p$ can be done in
$\Otld(1)$ time.\footnote{ We slightly abuse the $\Otld$ notation here
  as the parameters $m$ and $n$ for the implicit log-terms are not explicitly
  mentioned.}

\def\BB{\mbox{\bf B}}
\def\myhat#1{\hat{#1}}
\def\ix{\myhat{i}_{x}}
\def\jy{\myhat{j}_{y}}
\begin%
{definition}
\label{lemma:ij-hat}
Let $x=(a_1,\ldots,a_m)$ be a string and
$\myrle{x}=(\myrun{a'_1}{m_1},\ldots,\myrun{a'_k}{m_k})$ be its RLE
representation.  Let $M_{r}$ be $m_1+\cdots+m_{r}$ for each $0\leq r <
k$.  For each $1\leq i\leq m$, we use $\ix$ for the pair $(i_0,i_1)$
such that $i=M_{i_0}+i_1$ for $1\leq i_1\leq m_{i_0+1}$.
\end{definition}
We may use ${\myhat{i}}$ for $\ix$ if $x$ can be readily inferred from the
context. 

It is worth taking a moment to verify that we can compute $\ix$ efficiently. Assume that an array storing $M_{r}$ for $0\leq r < k$
is already built (in $\Oreg(k)$ time). Given $i\in\myran{m}$, we can
perform binary search on the array to find $i_0$ in
$\Oreg(\log(k))$ time such that $M_{i_0} < i \leq M_{i_0+1}$; we can
then compute $\ix$ as $(i_0,i-M_{i_0})$.


It is also worth verifying that we can compute $\mynear(p)$ in $\Otld(1)$ time. Given a point $p=(i,j)$ on a boundary of some block $B$ in
$\TDTW(x,y)$, we can compute $\ix=(i_0,i_1)$ in $\Oreg(\log(k))$ time.
Similarly, we can compute $\jy=(j_0,j_1)$ in $\Oreg(\log(\ell))$ time.
We can locate the block $B$ as $\BB_{i_0+1,j_0+1}$, and then find
$\mynear(p)$ in $\Oreg(1)$ time (assuming $\log_{1+\eps}(i_1)$ and
$\log_{1+\eps}(j_1)$ can be computed in $\Oreg(1)$ time). Therefore,
given $p$, we can compute $\mynear(p)$ in $\Otld(1)$ time.


Having established that we can compute $\myhat{i}$ and $\mynear(p)$
efficiently, we are nearly ready to describe the construction of the
edges $E$. Our final task before doing so is to establish a bit more
notation for how to talk about blocks.

Please recall that $\beta_h(\BB_{i,j})$ (resp. $\beta_v(\BB_{i,j})$
refers to the closest block $\BB_{i',j}$ (resp. $\BB_{i,j'}$) such
that $\delta(\BB_{i',j}) < \delta(\BB_{i,j})$
(resp. $\delta(\BB_{i,j'}) < \delta(\BB_{i,j})$) holds.  If there is
no such a block, $\beta_h(\BB_{i,j})$ (resp. $\beta_h(\BB_{i,j})$) is
undefined.

\begin{definition}
\label{definition:nbeta}
We refer to $\BB_{i_1,j}, \cdots, \BB_{i_S,j}$ as a $\beta_h$-sequence
if $\BB_{i_{s+1},j}=\beta_h(\BB_{i_{s},j})$ for $1\leq s < S$. Let
$\nbeta_h(x,y)$ be the length of a longest $\beta_h$-sequence.
Clearly, we have $\nbeta_h(x,y)\leq\myrlen{x}$. Similarly, we refer to
$\BB_{i,j_1}, \cdots, \BB_{i,j_S}$ as a $\beta_v$-sequence if
$\BB_{i,j_{s+1}}=\beta_v(\BB_{i,j_s})$ for $1\leq s < S$. Let
$\nbeta_v(x,y)$ be the length of a longest $\beta_v$-sequence.
Clearly, we have $\nbeta_v(x,y)\leq\myrlen{y}$. Let
$\nbeta(x,y)=\max(\nbeta_h(x,y),\nbeta_v(x,y))$.
\end{definition}

Observe that if
the underlying distance function $\delta$ on letters is from Hamming
space, then $\nbeta(x,y)\leq 2$ for any $x$ and $y$. Slightly more
generally, if $\delta$ is bounded by a constant, then $\nbeta(x,y)$ is
bounded by the same constant plus one. Later in the section, in the proof of
Theorem \ref{theorem:ApproxDTW2-poly-n}, we will also see an important (and
much more general) case where $\nbeta(x,y)$ is guaranteed to be
$\Otld(1)$.

\def\qq{\tilde{q}}
\def\qq{\overline{q}}
\def\mysnaps{{\it snaps}}
We are now ready to explain the construction of the set $E$ of edges
for connecting vertices in $V$. The basic idea is to construct $E$ in
such a way that, for any constructed path $p'_1q'_1q'_2q_3'p'_2$ (as
is depicted in Figure~\ref{figure:h-to-v-path-approximation}), there
should be a path in $G$ going from $p'_1$ to $p'_2$ whose cost is at
most the cost of the path in $G_0$---this allows for the
graph $G$ to capture all such paths, and ultimately allows for $G$ to
be used in our approximation algorithm. Formally, the construction of
$E$ can be performed with the following steps:
\begin%
{enumerate}
\item
Note that
the corner points in $V_0$ are all in $V$.
The edges
connecting these corner points in $E_0$ should be added into $E$.
\item
Given a point $p'_1\in V$ on the upper boundary of a block $B$, if
there is a d-step from $p'_1$ to $p'_2$ (on the lower boundary of the
block above $B$), then $p'_2$ is in $V$ and an edge from $p'_1$ to
$p'_2$ should be added into $E$ whose length equals $\delta(B)$.

\item
Given a point $p'_1\in V$ on the right boundary of a block $B$, if
there is a d-step from $p'_1$ to $p'_2$ (on the left boundary of the
block to the right of $B$), then $p'_2$ is in $V$ and an edge from
$p'_1$ to $p'_2$ should be added into $E$ whose length equals
$\delta(B)$.

\item
If two points $p'_1=(i_1,j_1)$ and $p'_2=(i_2,j_2)$ in $V$ are on the
same horizontal or vertical boundary of a block $B$ such that $p'_2$
is the closest point above or to the right of $p'_1$, then an edge from
$p'_1$ to $p'_2$ should be added into $E$ whose length equals
$(i_2-i_1)\cdot\delta(B)$ (horizontal) or $(j_2-j_1)\cdot\delta(B)$ (vertical).

\item
Let $p'_1$ be a point in $V$ on the lower boundary of a block
$B_1$. This step adds into $E$ edges between $p'_1$ and certain chosen
snap points $q'_4$ such that there are h-to-v paths connecting $p'_1$
and $q'_4$.

Let us use $B_1^{1},\ldots,B_1^{S}$ for the sequence where
$B_1=B_1^{1}$ and $B_1^{s+1}=\beta_v(B_1^{s})$ for $1\leq s < S$ and
$\beta_v(B_1^{S})$ is undefined. Clearly, $S$ is bounded by
$\nbeta_v(x,y)$ (according to the definition of $\nbeta_v(x,y)$).  Let
$B'_1$ range over $B_1^{1},\ldots,B_1^{S}$.

Let $q'_1 = (i'_1, j'_1)$ be the point on the lower boundary of $B'_1$
such that the path connecting $p'_1$ and $q'_1$ consists of only
d-steps.  Let $\mysnaps_h(q'_1)$ be the set consisting of the point
$q'_1$, the points on the lower boundary of $B'_1$ of the form
$(i'_1+\Delta(t), j'_1)$ for some $t\geq 0$, and the lower-right
corner point of $B'_1$.  For each $q'_2$ ranging over the set
$\mysnaps_h(q'_1)$, there exists at most one point $q'_3$ on the right
boundary of some $B_2$ (which is either $B'_1$ or sits above $B'_1$)
such that the path connecting $q'_2$ and $q'_3$ consists of only
d-steps.  As this $q'_3$ may not be in $V$, we choose $q'_4$ to be
$\mynear_v(q'_3)$, which is in $V$ by definition.  Note that the path
$p'_1q'_1q'_2q'_3q'_4$ is an h-to-v path in $\TDTW(x,y)$.
We add into $E$ an edge between $p'_1$ and $q'_4$ for each $q'_4$.
The length of each added edge connecting $p'_1$ and $q'_4$ is the
shortest distance between $p'_1$ and $q'_4$, which, by
Lemma~\ref{lemma:h-to-v-path-cost}, can be computed in
$\Otld(1)$ time.

There is one $q'_1$ for each $B'_1$, and there are at most
$\log_{1+\eps}(m)$ many of $q'_2$ for each $q'_1$, and there is at
most one $q'_3$ for each $q'_2$ and one $q'_4$ for each
$q'_3$. Therefore, for each $p'_1$, there are at most
$\nbeta_v(x,y)\cdot\log_{1+\eps}(m)$ edges added into $E$.

\item
Let $p'_1$ be a point in $V$ on the left boundary of a block $B_1$.
This step adds into $E$ edges between $p'_1$ and certain chosen snap
points $q'_4$ such that there are v-to-h paths connecting $p'_1$ and
$q'_4$.  We omit the details that are parallel to those in the
previous step.  There are at most $\nbeta_h(x,y)\cdot\log_{1+\eps}(n)$
edges added into $E$ for each $p'_1$.
\end{enumerate}

Let us take a moment to discuss how to efficiently compute the lengths
of the edges added to $E$ during the construction of
$G=\pairg{V,E}$. That is, how to construct and determine the cost of
each dotted path $p'_1q'_1q'_2q'_3q'_4$ depicted in
Figure \ref{figure:h-to-v-path-approximation}.  (Note that
$q'_4=\mynear_v(q'_3)$ is not shown in the figure.)

\begin%
{lemma}
\label%
{lemma:beta-compute}
Let $x$ and $y$ be two non-empty strings.
For $k=\myrlen{x}$ and $\ell=\myrlen{y}$,
\begin%
{enumerate}
\item
we can compute $\beta_h(B)$ for all the blocks $B$ in $\TDTW(x,y)$
in $\Oreg(k\ell\cdot\log(k))$ time, and
\item
we can compute $\beta_v(B)$ for all the blocks $B$ in $\TDTW(x,y)$
in $\Oreg(k\ell\cdot\log(\ell))$ time.
\end{enumerate}\end{lemma}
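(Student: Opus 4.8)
The two statements are mirror images---part 2 follows from part 1 by exchanging the roles of $x$ and $y$ (equivalently, by transposing $\TDTW(x,y)$)---so the plan is to prove part 1 and then remark that part 2 is identical after transposition. The key observation is that the definition of $\beta_h$ decouples completely across block-rows: for a fixed $j\in\myran{\ell}$, the blocks $\BB_{1,j},\ldots,\BB_{k,j}$ carry costs $\delta_{1,j},\ldots,\delta_{k,j}$, and by Definition~\ref{define:beta-h-v} computing $\beta_h(\BB_{i,j})$ for all $i$ in that row is exactly the classical ``next strictly smaller element to the right'' problem on the length-$k$ array $(\delta_{1,j},\ldots,\delta_{k,j})$: for each $i$ report the least $i'>i$ with $\delta_{i',j}<\delta_{i,j}$, and ``undefined'' if none exists. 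So I would (i) extract the cost array of each block-row, and (ii) run a standard next-smaller-element routine on it, then sum over the $\ell$ rows.

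For step (i), writing $\myrle{x}=(\myrun{a'_1}{m_1},\ldots,\myrun{a'_k}{m_k})$ and $\myrle{y}=(\myrun{b'_1}{n_1},\ldots,\myrun{b'_\ell}{n_\ell})$ for the RLE representations, every cell of $\BB_{i,j}$ has cost $\delta(a'_i,b'_j)$, so $\delta_{i,j}=\delta(a'_i,b'_j)$ is obtained in $\Oreg(1)$ time by direct array lookup and a whole row costs $\Oreg(k)$. For step (ii), next-smaller-element on a length-$k$ array is solved in $\Oreg(k\log k)$ time---in fact in $\Oreg(k)$ time---either by building an $\Oreg(k\log k)$-time sparse table supporting $\Oreg(1)$ range-minimum queries and binary-searching, for each $i$, for the least $i'>i$ with $\min_{i<t\le i'}\delta_{t,j}<\delta_{i,j}$ ($\Oreg(\log k)$ per index), or, more simply, by a single right-to-left pass with a monotonic stack of indices: before reading off the answer for $i$, pop every index on the stack whose cost is $\ge\delta_{i,j}$, report the remaining top (or ``undefined''), and push $i$, so that each index is pushed and popped once. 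Summing $\Oreg(k\log k)$ (or $\Oreg(k)$) over the $\ell$ block-rows gives $\Oreg(k\ell\log k)$, which is the stated bound; the stack variant actually yields $\Oreg(k\ell)$, and I state the weaker bound only because it is all that is needed later. Part 2 is the transpose: for each of the $k$ block-columns, run the same routine on $(\delta_{i,1},\ldots,\delta_{i,\ell})$ in $\Oreg(\ell\log\ell)$ time, for $\Oreg(k\ell\log\ell)$ total.

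I do not anticipate any genuine obstacle: this is a routine reduction to a textbook array problem. The only matters needing a moment's care are the bookkeeping that translates a block index $(i,j)$ into the runs it represents (immediate from the RLE arrays, and from the prefix-sum array $M_r$ of Definition~\ref{lemma:ij-hat} if one wants to convert to string positions), the correct handling of the ``undefined'' case for rows/columns with no strictly smaller block to the right/above, and checking that the $\Oreg(k\ell)$ output size is consistent with the claimed time---which it is. If anything here is delicate, it is purely notational rather than mathematical.
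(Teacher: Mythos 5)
Your proposal is correct and takes essentially the same approach as the paper: both decompose the computation row by row (column by column for part 2) and solve the ``next strictly smaller element to the right'' subproblem on each row's length-$k$ cost array. The only difference is the data structure used for that subproblem---the paper sweeps left to right with a max-heap, giving $\Oreg(k\ell\log k)$, while your monotonic-stack variant gives the slightly stronger $\Oreg(k\ell)$---which is an implementation detail rather than a different argument.
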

\begin%
{proof}
Please see Appendix~\ref{appendix:ApproxDTW2} for details.
\end{proof}

\begin%
{lemma}
\label{lemma:h-to-v-path-cost}
For each h-to-v path $(p'_1,q'_1,q'_2,q'_3,q'_4)$, its length can be
computed in $\Otld(1)$ time if the five snap points $p'_1$, $q'_1$, $q'_2$,
$q'_3$, and $q'_4$ are given.
\end{lemma}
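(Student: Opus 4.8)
The plan is to split the h-to-v path into its four canonical pieces, to notice that each piece is confined to a single run of $x$ or of $y$, and then to read off each piece's length with a constant number of prefix-sum lookups. Write $P = p'_1 q'_1 q'_2 q'_3 q'_4$; as in the discussion following Observation~\ref{obs:run-path-dtw}, $P = D_1 \pjoin H \pjoin D_2 \pjoin U$, where $D_1 = p'_1 q'_1$ and $D_2 = q'_2 q'_3$ are d-paths, $H = q'_1 q'_2$ is an h-path, and $U = q'_3 q'_4$ is a v-path (any of the four possibly a single point). A warping path never revisits a point, and $q'_1, q'_2, q'_3$ are the glue points of the concatenation, so $\CC(P) = \CC(D_1) + \CC(H) + \CC(D_2) + \CC(U) - \delta(q'_1) - \delta(q'_2) - \delta(q'_3)$; subtracting $\delta(q'_4)$ and using that $\LL$ equals cost minus the cost of the last cell gives $\LL(P) = \LL(D_1) + \LL(H) + \LL(D_2) + \LL(U)$, so it suffices to evaluate each summand in $\Otld(1)$ time.

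The structural point that drives everything is that all five points lie in a common column-run $i$ of $x$, with $q'_1$ and $q'_2$ additionally sharing the row $r^{*}$ that begins some run $j^{*}$ of $y$. This is immediate from the fifth step of the construction of $E$ (the step creating the edge from $p'_1$ to $q'_4$): $p'_1$ is on the lower boundary of a block $\BB_{i,j}$, the blocks $B'_1$ considered there range over the $\beta_v$-sequence of $\BB_{i,j}$, and by Definition~\ref{define:beta-h-v} each block of the form $\beta_v(\BB_{i,j'})$ is again $\BB_{i,j''}$ for some $j'' > j'$; hence $q'_1$ and $q'_2$ sit on the lower boundary of a block $\BB_{i,j^{*}}$, $q'_3$ lies on the right boundary of $\BB_{i,j^{*}}$ or of a block above it (so on the right boundary of column-run $i$), and $q'_4 = \mynear_v(q'_3)$ lies on that same vertical boundary. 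Since $D_1, D_2, U$ are monotone paths staying inside column-run $i$, each cell $(c,\rho)$ they visit has $x_c = a'_i$ and therefore costs $\delta(a'_i, y_\rho)$, a quantity depending only on the row $\rho$; and, since these pieces chain end-to-end, together they cover exactly the rows from $p'_1$'s row $r_1$ up through $q'_4$'s row $r_4$. Meanwhile $H$ lies entirely inside the single block $\BB_{i,j^{*}}$, so each of its cells costs $\delta(\BB_{i,j^{*}})$. Adding the four $\LL$-contributions (the vertical ones telescope) gives the closed form
\[
\LL(P) = \Bigl(\sum_{\rho = r_1}^{r_4 - 1} \delta(a'_i, y_\rho)\Bigr) + (c'_2 - c'_1)\,\delta(\BB_{i,j^{*}}),
\]
where $c'_1$ and $c'_2$ are the columns of $q'_1$ and $q'_2$.

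Evaluating this in $\Otld(1)$ time is now routine. The plan is to precompute once --- in $\Oreg(k\ell)$ time, which is dominated by the construction of $G$ --- the block costs $\delta(\BB_{i,j})$ and the weighted prefix sums $\Psi_v(i,j) = \sum_{j'' \le j} (\text{length of run }j''\text{ of }y)\cdot\delta(\BB_{i,j''})$, together with the run-offset arrays used to compute $\ix$ and $\jy$ (Definition~\ref{lemma:ij-hat}). Then $\sum_{\rho \le r} \delta(a'_i, y_\rho)$ is computed by binary-searching for $\jy(r) = (j_0, j_1)$ and returning $\Psi_v(i, j_0) + j_1 \cdot \delta(\BB_{i, j_0 + 1})$, all in $\Otld(1)$ time; differencing two such values gives $\sum_{\rho = r_1}^{r_4 - 1}\delta(a'_i, y_\rho)$. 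Recovering the indices $i$ (the column-run of $p'_1$) and $j^{*}$ (the row-run of $q'_1$) from the given coordinates takes two more binary searches, and the remaining quantities in the closed form are read off directly. Hence $\LL(P)$ is computed in $\Otld(1)$ time. (For v-to-h paths one argues identically, with the roles of rows and columns exchanged and a table $\Psi_h$ symmetric to $\Psi_v$.)

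I expect the one genuinely delicate step to be the structural claim that the diagonal pieces $D_1$ and $D_2$ never leave column-run $i$: a generic diagonal of $\TDTW(x,y)$ threads through $\Theta(k + \ell)$ blocks, so without this observation there would be no hope of an $\Otld(1)$ evaluation. Everything else amounts to checking, straight off the description of the edge-construction step, which block boundary each of $q'_1, q'_2, q'_3, q'_4$ lies on, and then performing the prefix-sum computation above.
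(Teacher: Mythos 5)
Your proposal is correct and follows essentially the same route as the paper's proof: the same decomposition into $D_1\pjoin H\pjoin D_2\pjoin U$, the same telescoping of the vertical coverage across rows $r_1$ through $r_4$ within a single column-run, and the same $\Oreg(k\ell)$-time weighted prefix sums (your $\Psi_v$ is exactly the paper's $\mu_v$) combined with binary search via $\ix,\jy$ to evaluate the closed form in $\Otld(1)$ time. Your write-up is in fact somewhat more explicit than the paper's about why the diagonal pieces remain inside column-run $i$ and about the $\LL$-versus-$\CC$ endpoint bookkeeping.
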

\begin%
{proof}
Please see Appendix~\ref{appendix:ApproxDTW2} for details.
\end{proof}
For brevity, we omit the obvious lemma parallel to
Lemma~\ref{lemma:h-to-v-path-cost} that is instead on computing
the lengths of v-to-h paths in $\Otld(1)$ time.


We are now in a position to state and prove the main theorems of the
paper. As noted earlier, the basic idea behind our $(1
+ \epsilon)$-approximation algorithms is to compute a path-distance
through the graph $G = (V, E)$, and show that this distance closely
approximates $\DTW(x, y)$.

We begin by stating a theorem that parameterizes its running time by
$\nbeta(x, y)$---we will then apply this result to obtain fast running
times in the cases where the distance function $\delta$ outputs either
$\Oreg(\log n)$-bit integer values
(Theorem \ref{theorem:ApproxDTW2-poly-n}) or
$\{0, 1\}$-values (Theorem \ref{theorem:ApproxDTW2-const}).
\begin%
{theorem}
\label{theorem:ApproxDTW2}
Let $x=(a_1,\ldots,a_m)$ and $y=(b_1,\ldots,b_n)$ be two non-empty
strings, and let $\myrle{x}$ and $\myrle{y}$ denote the run-length encoded versions of the two strings.
There exists a $(1+\eps)$-approximation algorithm (ApproxDTW) for
each $\eps>0$ that takes $\myrle{x}$ and $\myrle{y}$ as its input and
returns a value $\DTWtwo(x,y)$ satisfying $\DTW(x,y) \leq \DTWtwo(x,y)
\leq (1+\eps)\cdot\DTW(x,y)$. Moreover, the worst-case time complexity of this
algorithm is $\Otld(k\ell\cdot\nbeta(x,y)/\eps^2)$ for $k=\myrlen{x}$
and $\ell=\myrlen{y}$, where $\nbeta(x,y)$ is defined in
Definition~\ref{definition:nbeta}.
\end{theorem}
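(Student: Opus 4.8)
The plan is to prove Theorem~\ref{theorem:ApproxDTW2} by establishing correctness and running time of the algorithm ApproxDTW, which constructs the graph $G=\pairg{V,E}$ described above and returns $\DD(\vrtx_{1,1},\vrtx_{m,n})$, the shortest distance from $(1,1)$ to $(m,n)$ in $G$. Correctness amounts to proving the two-sided bound $\DTW(x,y)\leq\DD(\vrtx_{1,1},\vrtx_{m,n})\leq(1+\eps)\DTW(x,y)$. The lower bound (that $\DD$ does not underestimate) follows from the general principle in Section~\ref{subsection:DTW_with_graphs}: every edge of $E$ has length equal to (or at least) the true shortest distance between its endpoints in $G_0=\GDTW(x,y)$, so any $\vrtx_{1,1}$-to-$\vrtx_{m,n}$ path in $G$ corresponds to a (no-cheaper) warping path in $\TDTW(x,y)$, hence $\DD\geq\DTW$. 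The bulk of the work is the upper bound.

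For the upper bound, I would start from an optimal full path $P_0$ with $\CC(P_0)=\DTW(x,y)$ and apply Lemma~\ref{lemma:run-path-seq} to decompose it as $P_0=P_1+\cdots+P_R$, alternating h-to-v and v-to-h components. Invoking Observation~\ref{obs:run-path-dtw}, I may assume each $P_r$ has the canonical form $D_1\pjoin H\pjoin D_2\pjoin U$ (or its v-to-h mirror), with $H$ running along a lowest-cost block boundary. I then process the components left to right, maintaining the invariant that after handling $P_1,\ldots,P_r$ we have produced a path in $G$ from $\vrtx_{1,1}$ to a snap point $p'$ that lies on the same block boundary as, and weakly ahead of (to the right of / above), the endpoint $p$ of $P_r$, and whose cost is at most $(1+\eps)$ times $\CC(P_1+\cdots+P_r)$. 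The inductive step uses the fourth bulleted property of $G$: given the current snap point $p'_1$ (weakly ahead of the true start $p_1$ of $P_{r+1}$, on the same boundary), there is a path $P'$ in $G$ from $p'_1$ to $\mynear(p_2)$ with $\DD(P')\leq(1+\eps)\DD_0(P_{r+1})$. Concretely this $P'$ is realized by the edges added in steps~5 and~6 of the edge construction, where $|H|$ is rounded up to the nearest $\Delta(t)=\floor{(1+\eps)^t}$ (costing a $(1+\eps)$ factor by Observation~\ref{obs:Delta}, or snapped to a block corner), the block in which $H$ occurs is moved to a $\beta_v$-extremal block (another $(1+\eps)$ factor, using that distances only go down along a $\beta_v$-sequence), and the terminal vertical run is truncated at its first snap point $q'_4=\mynear_v(q'_3)$. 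I must check that starting from $p'_1\neq p_1$ only helps: shifting the start point forward along the lower boundary of a minimal-cost block, then running the same shape, does not increase cost — this is exactly the content of the first rounding step ("round $p_r$ to $\mynear(p_r)$ only decreases cost") in Simplification~1 of the technical overview, and it is why the invariant can be threaded through the concatenation even when $\mynear(q_r)$ fails to connect directly to $\mynear(p_{r+1})$. The residual tail of $U$ past the first snap point, together with any gap needed to reconnect to the next component's start, is absorbed by the first-type edges ("next snap point on the same boundary") added in step~4, whose cost is again at most $(1+\eps)$ times the true cost of traversing that stretch. Summing the per-component $(1+\eps)$ losses gives a path in $G$ of cost at most $(1+\eps)\CC(P_0)=(1+\eps)\DTW(x,y)$, and since $\DD$ is the shortest such distance, $\DD\leq(1+\eps)\DTW(x,y)$. (Rescaling $\eps$ by a constant at the outset handles the $1+\Oreg(\eps)$ versus $1+\eps$ discrepancy.)

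For the running time, I would bound $|V|$ and $|E|$ and the cost of building $G$. As already argued, $|V|=\Oreg(k\ell\log(m+n)/\eps)=\Otld(k\ell/\eps)$, since each of the $\leq 4k\ell$ corner points spawns $\Oreg(\log_{1+\eps}(m+n))=\Oreg(\eps^{-1}\log(m+n))$ snap points. For the out-degree: steps~1–4 contribute $\Oreg(1)$ edges per vertex, while step~5 contributes at most $\nbeta_v(x,y)\cdot\log_{1+\eps}(m)$ edges per lower-boundary vertex and step~6 contributes at most $\nbeta_h(x,y)\cdot\log_{1+\eps}(n)$ per left-boundary vertex. Hence $|E|=\Oreg(|V|\cdot\nbeta(x,y)\cdot\eps^{-1}\log(m+n))=\Otld(k\ell\cdot\nbeta(x,y)/\eps^2)$. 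Constructing $V$ costs $\Otld(1)$ per snap point via Definition~\ref{lemma:ij-hat} and the $\mynear$ computation; the $\beta_h,\beta_v$ tables are precomputed in $\Oreg(k\ell\log(k\ell))$ time by Lemma~\ref{lemma:beta-compute}; and the length of each edge is computed in $\Otld(1)$ time — trivially for the step~1–4 edges and by Lemma~\ref{lemma:h-to-v-path-cost} (and its v-to-h analogue) for the step~5–6 edges, given the five snap points defining the path. So $G$ is built in $\Otld(k\ell\cdot\nbeta(x,y)/\eps^2)$ time, and since $G$ is a DAG, the shortest distance $\DD(\vrtx_{1,1},\vrtx_{m,n})$ is found in $\Oreg(|V|+|E|)=\Otld(k\ell\cdot\nbeta(x,y)/\eps^2)$ time by dynamic programming, giving the claimed bound.

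I expect the main obstacle to be the careful bookkeeping in the inductive concatenation argument: verifying that the "weakly ahead, same boundary" invariant is genuinely maintained across the junction between $P_r$ and $P_{r+1}$, that the reconnection step never increases cost (monotonicity of component cost under forward-shifting the start point, which relies on $H$ living in a minimal-cost block and on the d-step structure), and that the truncation of $U$ at the first snap point does not lose information — i.e., the discarded tail of $U$ is exactly recoverable (up to $(1+\eps)$) by the same-boundary edges of step~4, so no cost is double-counted or dropped. This is precisely where the technical overview waved its hands ("it turns out this only makes $P'_{r+1}$ cheaper"), and making it rigorous for all four sub-component shapes (and the v-to-h mirror images), including the degenerate cases where $D_1$, $H$, $D_2$, or $U$ vanishes, is the delicate part. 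The rest — the geometric-series rounding via Observation~\ref{obs:Delta}, the $\beta$-sequence extremal-block replacement, and the size/time accounting — is routine given the lemmas already in place.
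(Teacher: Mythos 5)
Your proposal follows essentially the same route as the paper's proof: the same graph $G$, the same decomposition via Lemma~\ref{lemma:run-path-seq}, the same per-component $(1+\eps)$ bound realized by the step-5/6 edges (rounding $|H|$, $\beta_v$-extremal blocks, truncating $U$ at its first snap point), the same inductive concatenation across components, and the same $|V|$, $|E|$, and construction-time accounting. The only slip is that the algorithm must return the graph distance \emph{plus} $\delta(a_m,b_n)$, since the edge lengths (defined via $\LL$) omit the cost associated with a path's final point.
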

\begin%
{proof}
The analysis of the approximation ratio follows as in Section \ref{sec:technical}. For the full proof of the theorem, see Appendix~\ref{appendix:ApproxDTW2}.
\end{proof}


\subsection%
{Time-Bound for Polynomially-Bounded Letter Distances}
In this section, we present a variant of the algorithm ApproxDTW for
approximating $\DTW(x,y)$ under the general condition that the
distances between letters are integer values bounded by some
polynomial of the lengths of $x$ and $y$.  The time complexity of this
variant, which takes $\myrle{x}$ and $\myrle{y}$ as its input to
compute $\DTW(x,y)$, is $\Otld(k\ell/\eps^3)$ for $k=\myrlen{{x}}$
and $\ell=\myrlen{{y}}$.

\def\cpow{\mbox{cpow}}
\begin%
{definition}
Let $\delta$ be a distance function on letters such that
$\delta(a,b)\geq 1$ if $\delta(a,b)\neq 0$.  Given $\eps_1 > 0$, we
use $\delta_{\eps_1}$ for the distance function such that
$\delta_{\eps_1}(a,b)=0$ if $\delta(a,b)=0$, or $\delta_{\eps_1}(a,
b)=\cpow(1+\eps_1,\delta(a,b))$ if $\delta(a,b)\geq 1$, where
$\cpow(1+\eps_1,\alpha)$ equals $(1+\eps_1)^{t}$ for the least integer
$t$ such that $\alpha\leq(1+\eps_1)^{t}$ holds.
\end{definition}
Please note that
$\delta(a,b)\leq\delta_{\eps_1}(a,b)\leq(1+\eps_1)\cdot\delta(a,b)$ holds
for any letters $a$ and $b$.

\begin%
{lemma}
\label{lemma:delta_eps}
Let $\DTW(\delta)$ be the DTW distance function where the
underlying distance function for letters is $\delta$. Given $\eps_1>0$,
we have the following inequality for each pair of strings $x$ and $y$:
\[
\DTW(\delta_{\eps_1})(x,y)\leq(1+\eps_1)\cdot\DTW(\delta)(x,y)
\]
\end{lemma}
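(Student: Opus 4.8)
The plan is to use the pointwise inequality $\delta(a,b)\le\delta_{\eps_1}(a,b)\le(1+\eps_1)\cdot\delta(a,b)$, which was noted immediately after the definition of $\delta_{\eps_1}$, and to observe that it lifts verbatim to the level of path costs. Write $\CC^{\delta}_{x,y}(P)$ and $\CC^{\delta_{\eps_1}}_{x,y}(P)$ for the cost of a warping path $P$ in $\TDTW(x,y)$ computed with the letter-distance $\delta$ and with $\delta_{\eps_1}$, respectively. The point is that $\DTW(\delta_{\eps_1})(x,y)$ is a minimum over \emph{all} full paths, so to prove an upper bound on it I only need to exhibit \emph{one} good path and bound its $\delta_{\eps_1}$-cost — and the path that is optimal for $\delta$ is a convenient choice.

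Concretely, I would first fix an optimal full path $P=\pathseq{(i_1,j_1),\ldots,(i_R,j_R)}$ for $\DTW(\delta)(x,y)$, i.e.\ one with $\CC^{\delta}_{x,y}(P)=\DTW(\delta)(x,y)$; such a path exists since there are only finitely many full paths of order $(\mylen{x},\mylen{y})$. Summing the pointwise bound over the $R$ points of $P$ gives
\[
\CC^{\delta_{\eps_1}}_{x,y}(P)=\sum_{r=1}^{R}\delta_{\eps_1}(a_{i_r},b_{j_r})\le(1+\eps_1)\sum_{r=1}^{R}\delta(a_{i_r},b_{j_r})=(1+\eps_1)\cdot\CC^{\delta}_{x,y}(P)=(1+\eps_1)\cdot\DTW(\delta)(x,y).
\]
Since $\DTW(\delta_{\eps_1})(x,y)$ is by definition the minimum of $\CC^{\delta_{\eps_1}}_{x,y}(\cdot)$ over all full paths, in particular $\DTW(\delta_{\eps_1})(x,y)\le\CC^{\delta_{\eps_1}}_{x,y}(P)\le(1+\eps_1)\cdot\DTW(\delta)(x,y)$, which is the claimed inequality.

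I do not anticipate a real obstacle here: the statement is essentially a monotonicity-and-scaling remark about path costs. The only subtlety worth stating explicitly is that the path $P$ optimal for $\delta$ need not be optimal for $\delta_{\eps_1}$, but this is harmless, since we only need an upper bound on $\DTW(\delta_{\eps_1})(x,y)$ and evaluating it on a (possibly suboptimal) path can only overestimate it. For completeness I would also remark — although it is not part of the statement — that the reverse bound $\DTW(\delta)(x,y)\le\DTW(\delta_{\eps_1})(x,y)$ follows in exactly the same way from $\delta\le\delta_{\eps_1}$, so that together these show replacing $\delta$ by $\delta_{\eps_1}$ perturbs the DTW distance by at most a $(1+\eps_1)$ factor, which is how this lemma will be used in the proof of Theorem~\ref{theorem:ApproxDTW2-poly-n}.
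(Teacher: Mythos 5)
Your proof is correct and follows exactly the same route as the paper's: take a full path $P$ that is optimal for $\delta$, bound its $\delta_{\eps_1}$-cost termwise by $(1+\eps_1)$ times its $\delta$-cost, and use that $\DTW(\delta_{\eps_1})(x,y)$ is a minimum over all full paths. Your added remark about the reverse inequality matches how the paper uses the sandwich $\delta\leq\delta_{\eps_1}\leq(1+\eps_1)\delta$ in Theorem~\ref{theorem:ApproxDTW2-poly-n}.
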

\begin%
{proof}
Let $P$ be an optimal full path such that its length based on $\delta$
equals $\DTW(\delta)(x,y)$.  We know that the length of $P$ based on
$\delta_{\eps_1}$ is bounded by $(1+\eps_1)\cdot\DTW(\delta)(x,y)$
since $\delta_{\eps_1}(a,b)\leq(1+\eps_1)\cdot\delta(a,b)$ holds for
any letters $a$ and $b$.  As $\DTW(\delta_{\eps_1})(x,y)$ is bounded
by the length of $P$ based on $\delta_{\eps_1}$, we have the claimed
inequality.
\end{proof}
Let $\eps_1>0$ and $\eps_2>0$.  By Lemma~\ref{lemma:delta_eps}, every
$(1+\eps_2)$-approximation algorithm for DTW based on
$\delta_{\eps_1}$ is a $(1+\eps_1)\cdot(1+\eps_2)$-approximation
algorithm for DTW based on $\delta$.  For each $\eps>0$, if we choose,
for example, $\eps_1=\eps/2-\eps^2/2$ and $\eps_2=\eps/2$, then we
have $(1+\eps_1)\cdot(1+\eps_2) < 1+\eps$, implying that every
$(1+\eps_2)$-approximation algorithm for DTW based on
$\delta_{\eps_1}$ is a $(1+\eps)$-approximation algorithm for DTW
based on $\delta$.

\begin%
{theorem}\label{theorem:ApproxDTW2-poly-n}
Let $x=(a_1,\ldots,a_m)$ and $y=(b_1,\ldots,b_n)$ be two non-empty
strings.

Assume that the underlying distance function $\delta$ satisfies
(1)~$\delta(a,b)\geq 1$ if $\delta(a,b)\neq 0$ and (2)~$\delta(a,b)$
is $\poly(m+n)$ for any letters $a$ in $x$ and $b$ in $y$.  There
exists a $(1+\eps)$-approximation algorithm for each $\eps > 0$ that
takes $\myrle{x}$ and $\myrle{y}$ as its input and returns a value $w$
satisfying $\DTW(x,y) \leq w\leq (1+\eps)\cdot\DTW(x,y)$.  And the
worst-case time complexity of this algorithm is $\Otld(k\ell/\eps^3)$ for
$k=\myrlen{{x}}$ and $\ell=\myrlen{{y}}$.
\end{theorem}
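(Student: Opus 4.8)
The plan is to combine Theorem~\ref{theorem:ApproxDTW2} with the rounding device $\delta_{\eps_1}$ from Lemma~\ref{lemma:delta_eps}, so that the only thing that really needs to be established here is a bound on $\nbeta(x, y)$ once the letter-distances have been snapped to powers of $(1 + \eps_1)$. First I would fix $\eps > 0$ and set $\eps_1 = \eps/2 - \eps^2/2$ and $\eps_2 = \eps/2$, exactly as in the discussion immediately preceding the theorem statement, so that $(1 + \eps_1)(1 + \eps_2) < 1 + \eps$. I then run the algorithm ApproxDTW from Theorem~\ref{theorem:ApproxDTW2} but on the modified distance function $\delta_{\eps_1}$ rather than $\delta$; call its output $w$. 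By Lemma~\ref{lemma:delta_eps} and the guarantee of Theorem~\ref{theorem:ApproxDTW2} applied to $\delta_{\eps_1}$, we get
\[
\DTW(\delta)(x,y) \;\le\; \DTW(\delta_{\eps_1})(x,y) \;\le\; w \;\le\; (1+\eps_2)\,\DTW(\delta_{\eps_1})(x,y) \;\le\; (1+\eps_1)(1+\eps_2)\,\DTW(\delta)(x,y) \;<\; (1+\eps)\,\DTW(\delta)(x,y),
\]
where the first inequality uses $\delta \le \delta_{\eps_1}$ pointwise. This settles correctness of the approximation ratio.

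The main work is the running time. Theorem~\ref{theorem:ApproxDTW2} gives a bound of $\Otld(k\ell \cdot \nbeta(x,y)/\eps_2^2)$, and since $1/\eps_2 = O(1/\eps)$ this is $\Otld(k\ell \cdot \nbeta(x,y)/\eps^2)$; it remains to show that, for the snapped distance function $\delta_{\eps_1}$, we have $\nbeta(x,y) = \Otld(1)$, and in fact $\nbeta(x,y) = O(\log_{1+\eps_1}(\poly(m+n))) = \Otld(1/\eps)$, which yields the claimed $\Otld(k\ell/\eps^3)$. The key observation is the definition of a $\beta_h$-sequence $\BB_{i_1,j}, \BB_{i_2,j}, \ldots, \BB_{i_S,j}$: each successive block has \emph{strictly} smaller $\delta$-value than its predecessor (that is what $\beta_h$ picks out). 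But under $\delta_{\eps_1}$ every nonzero block-cost is a power $(1+\eps_1)^t$, and there is at most one block-cost equal to $0$; so a strictly decreasing sequence of block-costs can contain at most one $0$ and at most $\log_{1+\eps_1}(\max_{a,b}\delta_{\eps_1}(a,b))$ distinct nonzero values. Since $\delta(a,b) = \poly(m+n)$ by hypothesis, $\delta_{\eps_1}(a,b) \le (1+\eps_1)\poly(m+n)$, so the number of distinct nonzero powers is $O(\log(m+n)/\eps_1) = O(\log(m+n)/\eps)$. Hence $S \le 1 + O(\log(m+n)/\eps)$, giving $\nbeta_h(x,y) = \Otld(1/\eps)$; the argument for $\nbeta_v$ is identical, so $\nbeta(x,y) = \Otld(1/\eps)$.

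Plugging this into the time bound of Theorem~\ref{theorem:ApproxDTW2} gives running time $\Otld(k\ell \cdot \nbeta(x,y)/\eps^2) = \Otld(k\ell/\eps^3)$, as required. I would also remark that computing $\delta_{\eps_1}$ on demand — i.e., $\cpow(1+\eps_1, \delta(a,b))$ — costs only $\Otld(1)$ time per query (one logarithm and one exponentiation on $O(\log n)$-bit integers), so passing $\delta_{\eps_1}$ to ApproxDTW in place of $\delta$ does not change the asymptotic running time; the algorithm never needs the whole table of snapped distances materialized, only block-costs, of which there are $O(k\ell)$. The one point to be careful about is the hypothesis $\delta(a,b) \ge 1$ whenever $\delta(a,b) \ne 0$: this is exactly what makes $\cpow$ well-defined with a nonnegative exponent and what guarantees that $0$ is the unique non-power value, so it must be invoked explicitly when bounding the length of a $\beta$-sequence. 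I expect this last bookkeeping — confirming that the ``at most one zero'' plus ``$O(\log(m+n)/\eps)$ distinct powers'' count really does bound the strictly-decreasing chain length, and that it feeds through Theorem~\ref{theorem:ApproxDTW2} cleanly — to be the only place where any care is needed; everything else is a direct composition of results already proved.
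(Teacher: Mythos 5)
Your proposal is correct and follows essentially the same route as the paper's own proof: the same choice of $\eps_1$ and $\eps_2$, the same reduction to Theorem~\ref{theorem:ApproxDTW2} under the snapped distance $\delta_{\eps_1}$, and the same observation that $\delta_{\eps_1}$ takes only $O(\log_{1+\eps_1}(m+n))$ distinct values, which bounds $\nbeta(x,y)$ by $\Otld(1/\eps)$ and yields the $\Otld(k\ell/\eps^3)$ running time. Your extra remarks (the pointwise inequality $\delta \le \delta_{\eps_1}$ justifying the lower bound, and the $\Otld(1)$-time on-demand evaluation of $\cpow$) are details the paper leaves implicit but are consistent with its argument.
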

\begin%
{proof}
Let $\eps_1=\eps/2-\eps^2/2$ and $\eps_2=\eps/2$.  By
Theorem~\ref{theorem:ApproxDTW2}, ApproxDTW (as is
presented in the proof of Theorem~\ref{theorem:ApproxDTW2}) takes
$\myrle{x}$ and $\myrle{y}$ as input and returns a
$(1+\eps_2)$-approximation of $\DTW(\delta_{\eps_1})(x,y)$.  And the
time complexity of the algorithm is $\Otld(k\ell\cdot\nbeta(x,y)/{\eps_2}^2)$.

Note that there are only $\Oreg(\log_{1+\eps_1}(m+n))$-many distinct
values of $\delta_{\eps_1}(a,b)$ for $a$ and $b$ ranging over letters
in $x$ and $y$, respectively.  Hence, for the underlying distance
function $\delta_{\eps_1}$ on letters, $\nbeta_h(x,y)$ is
$\Oreg(\log_{1+\eps_1}(m+n))$ and $\nbeta_v(x,y)$ is also
$\Oreg(\log_{1+\eps_1}(m+n))$, which implies that $\nbeta(x,y)$ is
$\Oreg(\log_{1+\eps_1}(m+n))$ or simply $\Otld(1/\eps_1)$. Therefore,
we can use ApproxDTW to compute a $(1+\eps_2)$-approximation of
$\DTW(\delta_{\eps_1})(x,y)$ in
$\Otld(k\ell/{\eps_1}{\eps_2}^2)$ time.  Since any
$(1+\eps_2)$-approximation of $\DTW(\delta_{\eps_1})(x,y)$ is a
$(1+\eps)$-approximation of $\DTW(\delta)(x,y)$, we are done.
\end{proof}

\subsection%
{Time-Bound for Constant-Bounded Letter Distances}
Assume that there exists a constant $N$ such that $\delta(a,b)$ is an
integer less than $N$ for each pair $a$ and $b$ in $\letters$. For
instance, $(\Sigma,\delta)$ satisfies this condition if it is Hamming
space (for which $N$ can be set to $2$).
\begin%
{theorem}\label{theorem:ApproxDTW2-const}
Assume that $\delta(a,b)$ are $\Oreg(1)$ for $a,b\in\Sigma$.
Then ApproxDTW,
the $(1+\eps)$-approximation algorithm for DTW
given in the proof of Theorem~\ref{theorem:ApproxDTW2}, runs
in $\Otld(k\ell/\eps^2)$ time for $k=\myrlen{{x}}$ and
$\ell=\myrlen{{y}}$, where $\myrle{x}$ and $\myrle{y}$ are the
input of the algorithm.
\end{theorem}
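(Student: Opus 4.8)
The plan is to observe that, for a constant-bounded distance function, the quantity $\nbeta(x,y)$ that governs the running time in Theorem~\ref{theorem:ApproxDTW2} is already $\Oreg(1)$, so that we may invoke that theorem directly — without the extra rounding step used in Theorem~\ref{theorem:ApproxDTW2-poly-n} — and simply read off the claimed bound.

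First I would recall that $\nbeta(x,y) = \max(\nbeta_h(x,y),\nbeta_v(x,y))$, where $\nbeta_h(x,y)$ is the length of a longest $\beta_h$-sequence $\BB_{i_1,j},\ldots,\BB_{i_S,j}$, i.e., a sequence of blocks in a common row along which the costs $\delta(\BB_{i_1,j}) > \delta(\BB_{i_2,j}) > \cdots > \delta(\BB_{i_S,j})$ are strictly decreasing (the strict decrease is immediate from Definition~\ref{define:beta-h-v}). Since $\delta(a,b)$ is a nonnegative integer less than the constant $N$ for all $a,b\in\letters$, and the cost $\delta(\BB_{i,j})$ of any block is just $\delta(a,b)$ for the letters $a,b$ of the corresponding runs, every block cost lies in $\{0,1,\ldots,N-1\}$. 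A strictly decreasing chain of integers drawn from this set has at most $N$ terms, so $\nbeta_h(x,y)\le N$; the same argument bounds $\nbeta_v(x,y)$, and hence $\nbeta(x,y)\le N = \Oreg(1)$. (This recovers the remark made just after Definition~\ref{definition:nbeta}; in Hamming space $N=2$ gives $\nbeta(x,y)\le 2$.)

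Next I would apply Theorem~\ref{theorem:ApproxDTW2} to $x$, $y$, and the given $\eps$. Its hypotheses hold trivially here, since $\delta$ takes only values in $\{0,1,\ldots,N-1\}$ with $N=\Oreg(1)$. The theorem provides the algorithm ApproxDTW, which takes $\myrle{x}$ and $\myrle{y}$ as input and returns a value $\DTWtwo(x,y)$ satisfying $\DTW(x,y)\le\DTWtwo(x,y)\le(1+\eps)\DTW(x,y)$ in time $\Otld(k\ell\cdot\nbeta(x,y)/\eps^2)$. Substituting $\nbeta(x,y)=\Oreg(1)$ yields running time $\Otld(k\ell/\eps^2)$, as claimed.

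There is essentially no obstacle in this argument; the only point worth stressing is that, unlike in Theorem~\ref{theorem:ApproxDTW2-poly-n}, we do \emph{not} route the computation through the rounded distance function $\delta_{\eps_1}$. Doing so would be unnecessary — $\nbeta$ is already constant for $\delta$ itself — and would cost an extra $1/\eps$ factor. Running ApproxDTW directly on $\delta$ is precisely what gives the improved $\Otld(k\ell/\eps^2)$ bound in place of $\Otld(k\ell/\eps^3)$.
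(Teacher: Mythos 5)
Your proposal is correct and takes essentially the same route as the paper, which likewise derives the theorem immediately from Theorem~\ref{theorem:ApproxDTW2} by noting that $\nbeta(x,y)=\Oreg(1)$ when $\delta$ is constant-bounded. You merely spell out the strictly-decreasing-chain argument that the paper relegates to the remark following Definition~\ref{definition:nbeta}.
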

\begin%
{proof}
This theorem follows from Theorem~\ref{theorem:ApproxDTW2} immediately
since $\nbeta(x,y)$ is $\Oreg(1)$.
\end{proof}

\section{Conclusion}
\label{section:conclusion}
We have presented in this paper an algorithm for approximating the
DTW distance between two RLE strings.  Trading accuracy for
efficiency, this algorithm is of (near) quadratic-time complexity
and thus, as can be expected, asymptotically faster than the exact DTW
algorithm of cubic-time
complexity~\cite{DBLP:journals/corr/abs-1903-03003}, which is
currently considered the state-of-art of its kind.


It will be interesting to further investigate whether there exist
asymptotically faster approximation algorithms for DTW than the one
presented in this paper.  In particular, it seems both interesting and
challenging to answer the open question as to whether there exists a
(near) quadratic-time algorithm for computing the (exact) DTW distance
between two RLE strings.

\section*{Acknowledgements}

The authors would like to thank Charles E. Leiserson for his helpful feedback and suggestions.

This research was funded by a Hertz Foundation Fellowship and an NSF GRFP Fellowhship. 
The research was also partially sponsored by the United States Air Force Research Laboratory and the United States Air Force Artificial Intelligence Accelerator and was accomplished under Cooperative Agreement Number FA8750-19-2-1000. The views and conclusions contained in this document are those of the authors and should not be interpreted as representing the official policies, either expressed or implied, of the United States Air Force or the U.S. Government. The U.S. Government is authorized to reproduce and distribute reprints for Government purposes notwithstanding any copyright notation herein.

\newpage
\appendix

\section%
{Omitted Proofs}
\label{appendix:ApproxDTW2}

\noindent
{\bf Lemma~\ref{lemma:run-path-seq}}~
Let $x$ and $y$ be two non-empty strings.  Given a full path $P_0$ in
$\TDTW(x,y)$, we have $P_0 = P_1 + \cdots + P_R$ where $P_r$ is an
h-to-v path for each odd $1\leq r\leq R$ and $P_r$ is a v-to-h path for each even
$1\leq r\leq R$.
\begin%
{proof}[Proof of Lemma~\ref{lemma:run-path-seq}]

For every point $p_1$ in $P_0$ that is on the lower boundary of a
block $B_{i,j_1}$, one can always find a point $p_2$ on the right
boundary of another block $B_{i,j_2}$ (as the last point in $P_0$ is
an upper-right corner point) such that the subpath in $P_0$ connecting
$p_1$ and $p_2$ is a maximal h-to-v path.  And the point immediately
following $p_2$ in $P_0$, if it exists, must be on the left boundary
of some block.

Similarly, for every point $p_1$ in $P_0$ that is on the left boundary
of a block $B_{i_1,j}$, one can always find a point $p_2$ on the upper
boundary of another block $B_{i_2,j}$ such that the subpath in $P_0$
connecting $p_1$ and $p_2$ is a maximal v-to-h path.  And the point
immediately following $p_2$ in $P_0$, if it exists, must be on the
lower boundary of some block.
\end{proof}


\begin%
{definition}
Let $x=(a_1,\ldots,a_m)$ and $y=(b_1,\ldots,b_n)$ be two non-empty
strings.  Let the RLE representations of $x$ and $y$ be
$\myrle{x}=(\myrun{a'_1}{m_1},\ldots,\myrun{a'_k}{m_k})$ and
$\myrle{y}=(\myrun{b'_1}{n_1},\ldots,\myrun{b'_\ell}{n_\ell})$, respectively.  Given a
block $\BB_{i,j}$ in $\TDTW(x,y)$, let us define $\mu_v(\BB_{i,j})$
and $\mu_h(\BB_{i,j})$ as follows:
\[
\begin%
{array}{llcccccc}
~~~~~~~~~~~~~ &
\mu_h(\BB_{i,j}) & = & m_1\cdot\delta(a'_1,b'_j)&+&\cdots&+&m_{i-1}\cdot\delta(a'_{i-1},b'_{j}) \\
~~~~~~~~~~~~~ &
\mu_v(\BB_{i,j}) & = & n_1\cdot\delta(a'_i,b'_1)&+&\cdots&+&n_{j-1}\cdot\delta(a'_{i},b'_{j-1}) \\
\end{array}\]
\end{definition}
It should be clear that $\mu_h(\BB_{i,j})$ and $\mu_v(\BB_{i,j})$
for $1\leq i\leq k$ and $1\leq j\leq l$ can be computed and then
stored in $\Oreg(k\ell)$ time.  Afterwards, we can resort to a simple
$\Oreg(1)$-time lookup to find each of these values.

\vspace{6pt}
\noindent
{\bf Lemma~\ref{lemma:beta-compute}}~
Let $x$ and $y$ be two non-empty strings and $k=\myrlen{x}$ and
$\ell=\myrlen{y}$.
\begin%
{enumerate}
\item
We can compute $\beta_h(B)$ for all the blocks in $\TDTW(x,y)$
in $\Oreg(k\ell\cdot\log(k))$ time.
\item
We can compute $\beta_v(B)$ for all the blocks in $\TDTW(x,y)$
in $\Oreg(k\ell\cdot\log(\ell))$ time.
\end{enumerate}
\begin%
{proof}[Proof of Lemma~\ref{lemma:beta-compute}]
Given $j_0$ satisfying $1\leq j_0\leq\ell$, we sketch a way to compute
$\beta_h(\BB_{i,j_0})$ for $1\leq i\leq k$ in
$\Oreg(\ell\cdot\log(k))$ time.  While traversing $\BB_{i,j_0}$ for
$1\leq i\leq k$, we use a max-heap to store those blocks $B$ for which
$\beta_h(B)$ are yet to be computed. The priority of a block $B$ is
simply $\delta(B$).  Let $\BB_{i,j_0}$ be the block being
currently processed; we remove from the max-heap all the blocks $B$
satisfying $\delta(\BB_{i,j_0}) < \delta(B)$ and define $\beta_h(B)$
to be $\BB_{i,j_0}$; we then insert $\BB_{i,j_0}$ into the max-heap
and proceed to process the block $\BB_{i+1,j_0}$ (if it exists). As both
inserting and removing a block from the max-heap takes
$\Oreg(\log(k))$ time, we can process all the blocks $\BB_{i,j_0}$ for
$1\leq i\leq k$ in $\Oreg(k\cdot\log(k))$ time.  Therefore, we can
compute $\beta_h(B)$ for all the blocks in $\TDTW(x,y)$ in
$\Oreg(k\ell\cdot\log(k))$ time. A similar argument can be made for
part~(2).
\end{proof}

\noindent
{\bf Lemma~\ref{lemma:h-to-v-path-cost}}~ For each h-to-v path
$p'_1q'_1q'_2q'_3q'_4$, its length can be computed in $\Otld(1)$ time
if the five points $p'_1$, $q'_1$, $q'_2$, $q'_3$, and $q'_4$ are given.
\begin
{proof}[Proof of Lemma~\ref{lemma:h-to-v-path-cost}] Let
$p'_1=(i_1,j_1)$ and $q'_4=(i_2,j_2)$.  Clearly, $i_{1,0}=i_{2,0}$ as
the path $p'_1q'_1q'_2q'_3q'_4$ is an h-to-v path.  Note that we can
compute ${\myhat{i}}_1=(i_{1,0},i_{1,1})$ in $\Otld(1)$ time.  Also,
we can compute ${\myhat{j}}_1=(j_{1,0},j_{1,1})$ and
${\myhat{j}}_2=(j_{2,0},j_{2,1})$ in $\Otld(1)$ time. Then the sum of
the lengths of the path $p'_1q'_1$ and the path $q'_2q'_3q'_4$ equals
the value $\mu_v(B_2)-\mu_v(B_1)+(j_{2,1}-1)\cdot\delta(B_2)$, where
$B_1$ and $B_2$ are $\BB_{i_{1,0}+1,j_{1,0}+1}$ and
$\BB_{i_{2,0}+1,j_{2,0}+1}$, respectively.  Given $q'_1$ and $q'_2$,
the length of the path $q'_1q'_2$ can be computed in $\Otld(1)$
time. Therefore, the length of the path $p'_1q'_1q'_2q'_3q'_4$ can be
computed in $\Otld(1)$ time.
\end{proof}

\noindent
{\bf Theorem~\ref{theorem:ApproxDTW2}}~
Let $x=(a_1,\ldots,a_m)$ and $y=(b_1,\ldots,b_n)$ be two non-empty
strings.
There exists a $(1+\eps)$-approximation algorithm (ApproxDTW)
for each $\eps>0$ that takes $\myrle{x}$ and $\myrle{y}$ as its input
and returns a value $\DTWtwo(x,y)$ satisfying $\DTW(x,y) \leq \DTWtwo(x,y)
\leq (1+\eps)\cdot\DTW(x,y)$. And the worst-case time complexity of this algorithm
is $\Otld(k\ell\cdot\nbeta(x,y)/\eps^2)$ for $k=\myrlen{{x}}$ and $\ell=\myrlen{{y}}$,
where $\nbeta(x,y)$ is defined in Definition~\ref{definition:nbeta}.
\begin%
{proof}[Proof of Theorem~\ref{theorem:ApproxDTW2}]
Given the RLE representations $\myrle{x}$ and $\myrle{y}$ of $x$ and
$y$, respectively, we can build a graph $G=\pairg{V,E}$ in
$\Otld(k\ell\cdot\nbeta(x,y)/\eps^2)$ time by following the steps
outlined in Section~\ref{section:ApproxDTW2}. Let $\DTWtwo(x,y)$ be the
shortest distance between $\vrtx_{1,1}$ and $\vrtx_{m,n}$ (defined in
$G$) plus $\TDTW[m,n]$ (that is, $\delta(a_m,b_n)$).

Clearly, $\DTW(x,y)\leq\DTWtwo(x,y)$ holds as every path in $G$ can be
seen as a path in $\GDTW(x,y)$.

In order to show $\DTWtwo(x,y)\leq(1+\eps)\cdot\DTW(x,y)$, we first
verify that the length of an h-to-v path $p'_1q'_1q'_2q'_3p'_2$
depicted with dashed lines in
Figure~\ref{figure:h-to-v-path-approximation} is bounded by $(1+\eps)$
times the length of its counterpart $p_1q_1q_2p_2$ depicted in solid
lines.  Note that we hereby omit a completely parallel argument of
approximating a v-to-h path for brevity.

Let us analyze the diagram in
Figure~\ref{figure:h-to-v-path-approximation}.  Note that we have the
following:
\[
\begin%
{array}{rcl}
\LL(p_1q_1q_2p_2) & = &
\LL(p_1q_1)+\LL(q_1q_2)+\LL(q_2p_2) \\
\LL(p'_1q'_1q'_2q'_3p'_2) & = & 
\LL(p'_1q'_1)+\LL(q'_1q'_2)+\LL(q'_2q'_3p'_2) \\
\end{array}\]
Hence, it suffices to show both
$\LL(q'_1q'_2)\leq(1+\eps)\cdot\LL(q_1q_2)$ and the following:
\[
\LL(p'_1q'_1)+\LL(q'_2q'_3p'_2)\leq(1+\eps)\cdot(\LL(p_1q_1)+\LL(q_2p_2))
\]

Assume that $p_1$ and $p'_1$ are on the lower boundary of some block
$B_1$ and $p_2$ and $p'_2$ are on the right boundary of some block
$B_2$. Also, assume that $q_1$ is on the lower boundary of some block
$B'_1$.  Note that $B'_1$ is either $B_1$ or sits above $B_1$ and
$B_2$ is either $B'_1$ or sits above $B'_1$.

If $q'_1$ is to the right of $q_2$, then $q'_2$ is $q'_1$.  Otherwise,
$q'_2$ is the leftmost point on the lower boundary of $B'_1$ such that
(1) $q'_2$ is to the right of $q_2$ and (2) $q'_2$ is of the form
$(i'_1,j'_1+\Delta(t))$ for some $t\geq 0$, where $q'_1=(i'_1,
j'_1)$, or it is at the right end of the boundary. Either way, we have
$\LL(q'_1q'_2)\leq(1+\eps)\cdot\LL(q'_1q_2)$ (see
Observation~\ref{obs:Delta}), which implies
$\LL(q'_1q'_2)\leq(1+\eps)\cdot\LL(q_1q_2)$.

Let $p_4$ (which is not shown in the diagram) be $\mylr(B_1)$, the
lower-right corner of $B_1$. Then $\mylr(B_2)$, the lower corner of
$B_2$, is between $p_4$ and $p_2$.  We have
$\LL(p_4p'_2)\leq(1+\eps)\cdot\LL(p_4p_2)$ since $p'_2$ is
$\mynear_v(p_2)$, which implies $\LL(p_2p'_2)\leq\eps\cdot\LL(p_4p_2)$.
Also, $\LL(p_4p_2)\leq\LL(p_1q_1)+\LL(q_2p_2)$ holds in this case,
implying $\LL(p_2p'_2)\leq\eps\cdot(\LL(p_1q_1)+\LL(q_2p_2))$.
Therefore, the length of the path $p'_1q'_1q'_2q'_3p'_2$ in dashed
lines is bounded by $(1+\eps)$ times the length of the path
$p_1q_1q_2p_2$ in solid lines.

Let $P_0$ be an optimal full path in $G_0=\GDTW(x,y)$.  By
Lemma~\ref{lemma:run-path-seq}, we have $P_0 = P_1 + \cdots + P_R$
where $P_1$ is an h-to-v path and $P_1,\ldots,P_R$ are a sequence of
alternating h-to-v paths and v-to-h paths.

Assume that each $P_r$ for $1\leq r\leq R$ connects a point $p_{r}$ to
another point $q_{r}$. Let $p'_{1}$ be $(1,1)$. Then we can find a
path $P'_1$ in $G$ connecting $p'_{1}$ to $q'_1=\mynear(q_1)\in V$
such that $\DD(P'_1)\leq (1+\eps)\DD_0(P_1)$ holds, where we use $\DD$ and
$\DD_0$ for the shortest distance functions in $G$ and $G_0$,
respectively.  If $R\geq 2$, then there is a step from $q_{1}$ to
$p_{2}$, which implies a step from $q'_{1}$ to
$p'_{2}=\mynear(p_{2})\in V$; and we can then find a path $P'_2$ in
$G$ connecting $p'_{2}$ to $q'_{2}=\mynear(q_{2})\in V$ such that
$\DD(P'_2)\leq (1+\eps)\DD_0(P_2)$ holds. Continuing with this
procedure, we can find paths $P'_1,\ldots,P'_R$ in $G$ such that
$\DD(P'_r)\leq (1+\eps)\DD_0(P_r)$ holds for each $1\leq r\leq R$.
Let $P'_0=P'_1+\cdots+P'_R$, and we have $\DD(P'_0)\leq
(1+\eps)\DD_0(P_0)$.  Therefore, the shortest distance from the point
$(1,1)$ to the point $(m,n)$ in $G$ is a $(1+\eps)$-approximation of
$\DTW(x,y)$ (since $P_0$ is a chosen optimal full path).

By inspecting the procedure given above for building $E$, we can
readily conclude that $\mylen{E}$, the size of $E$, is
$\Oreg((k\ell)\cdot\nbeta(x,y)\cdot\log^2_{1+\eps}(m+n))$. Also, computing
the shortest distance between two given points in $G=\pairg{V,E}$
can be done (with dynamic programming) in $\Oreg(\mylen{E})$ time or
simply $\Otld(k\ell\cdot\nbeta(x,y)/\eps^2)$ time.
\end{proof}


\begin{thebibliography}{10}

\bibitem{dtwapp4}
John Aach and George~M Church.
\newblock Aligning gene expression time series with time warping algorithms.
\newblock {\em Bioinformatics}, 17(6):495--508, 2001.

\bibitem{DTWhard2}
Amir Abboud, Arturs Backurs, and Virginia~Vassilevska Williams.
\newblock Tight hardness results for lcs and other sequence similarity
  measures.
\newblock In {\em 56th Annual Symposium on Foundations of Computer Science
  (FOCS)}, pages 59--78. IEEE, 2015.

\bibitem{DTWapprox1}
Pankaj~K. Agarwal, Kyle Fox, Jiangwei Pan, and Rex Ying.
\newblock Approximating dynamic time warping and edit distance for a pair of
  point sequences.
\newblock In {\em 32nd International Symposium on Computational Geometry, SoCG
  2016, June 14-18, 2016, Boston, MA, {USA}}, pages 6:1--6:16, 2016.
\newblock \href {https://doi.org/10.4230/LIPIcs.SoCG.2016.6}
  {\path{doi:10.4230/LIPIcs.SoCG.2016.6}}.

\bibitem{edapprox5}
Alexandr Andoni, Robert Krauthgamer, and Krzysztof Onak.
\newblock Polylogarithmic approximation for edit distance and the asymmetric
  query complexity.
\newblock In {\em Proceedings of the 51st Annual Symposium on Foundations of
  Computer Science (FOCS)}, pages 377--386. IEEE, 2010.

\bibitem{edapprox1}
Alexandr Andoni and Negev~Shekel Nosatzki.
\newblock Edit distance in near-linear time: It's a constant factor.
\newblock In {\em 2020 IEEE 61st Annual Symposium on Foundations of Computer
  Science (FOCS)}, pages 990--1001. IEEE, 2020.

\bibitem{edapprox7}
Alexandr Andoni and Krzysztof Onak.
\newblock Approximating edit distance in near-linear time.
\newblock {\em {SIAM} J. Comput.}, 41(6):1635--1648, 2012.

\bibitem{rle6}
Alberto Apostolico, Gad~M Landau, and Steven Skiena.
\newblock {\em Matching for run-length encoded strings}.
\newblock IEEE, 1997.

\bibitem{rle2}
Ora Arbell, Gad~M Landau, and Joseph~SB Mitchell.
\newblock Edit distance of run-length encoded strings.
\newblock {\em Information Processing Letters}, 83(6):307--314, 2002.

\bibitem{edapprox6}
Ziv Bar-Yossef, T.S. Jayram, Robert Krauthgamer, and Ravi Kumar.
\newblock Approximating edit distance efficiently.
\newblock In {\em Proceedings of 45th Annual Symposium on Foundations of
  Computer Science (FOCS)}, pages 550--559. IEEE, 2004.

\bibitem{edapprox3}
Joshua Brakensiek and Aviad Rubinstein.
\newblock Constant-factor approximation of near-linear edit distance in
  near-linear time.
\newblock In {\em Proceedings of the 52nd Annual ACM SIGACT Symposium on Theory
  of Computing}, pages 685--698, 2020.

\bibitem{dtwcomm}
Vladimir Braverman, Moses Charikar, William Kuszmaul, David Woodruff, and Lin
  Yang.
\newblock The one-way communication complexity of dynamic time warping
  distance.
\newblock Manuscript submitted for publication, 2018.

\bibitem{DTWhard}
Karl Bringmann and Marvin K{\"u}nnemann.
\newblock Quadratic conditional lower bounds for string problems and dynamic
  time warping.
\newblock In {\em 56th Annual Symposium on Foundations of Computer Science
  (FOCS)}, pages 79--97. IEEE, 2015.

\bibitem{rle5}
Horst Bunke and J{\'a}nos Csirik.
\newblock Edit distance of run-length coded strings.
\newblock In {\em Proceedings of the 1992 ACM/SIGAPP Symposium on Applied
  computing: Technological challenges of the 1990's}, pages 137--143, 1992.

\bibitem{dtwapp5}
EG~Caiani, A~Porta, G~Baselli, M~Turiel, S~Muzzupappa, F~Pieruzzi, C~Crema,
  A~Malliani, and S~Cerutti.
\newblock Warped-average template technique to track on a cycle-by-cycle basis
  the cardiac filling phases on left ventricular volume.
\newblock In {\em Computers in Cardiology 1998}, pages 73--76. IEEE, 1998.

\bibitem{edapprox4}
Diptarka Chakraborty, Debarati Das, Elazar Goldenberg, Michal Kouck{\`y}, and
  Michael Saks.
\newblock Approximating edit distance within constant factor in truly
  sub-quadratic time.
\newblock {\em Journal of the ACM (JACM)}, 67(6):1--22, 2020.

\bibitem{chakraborty2016streaming}
Diptarka Chakraborty, Elazar Goldenberg, and Michal Kouck{\`y}.
\newblock Streaming algorithms for embedding and computing edit distance in the
  low distance regime.
\newblock In {\em Proceedings of the 48th Annual Symposium on Theory of
  Computing (STOC)}, pages 712--725. ACM, 2016.

\bibitem{edapprox8}
Moses Charikar, Ofir Geri, Michael~P Kim, and William Kuszmaul.
\newblock On estimating edit distance: Alignment, dimension reduction, and
  embeddings.
\newblock In {\em 45th International Colloquium on Automata, Languages, and
  Programming (ICALP 2018)}. Schloss Dagstuhl-Leibniz-Zentrum fuer Informatik,
  2018.

\bibitem{rle3}
Kuan-Yu Chen and Kun-Mao Chao.
\newblock A fully compressed algorithm for computing the edit distance of
  run-length encoded strings.
\newblock {\em Algorithmica}, 65(2):354--370, 2013.

\bibitem{rle1}
Rapha{\"e}l Clifford, Pawel Gawrychowski, Tomasz Kociumaka, Daniel~P Martin,
  and Przemyslaw Uznanski.
\newblock Rle edit distance in near optimal time.
\newblock In {\em 44th International Symposium on Mathematical Foundations of
  Computer Science (MFCS 2019)}. Schloss Dagstuhl-Leibniz-Zentrum fuer
  Informatik, 2019.

\bibitem{dtwapp1}
Alexander De~Luca, Alina Hang, Frederik Brudy, Christian Lindner, and Heinrich
  Hussmann.
\newblock Touch me once and i know it's you!: implicit authentication based on
  touch screen patterns.
\newblock In {\em Proceedings of the SIGCHI Conference on Human Factors in
  Computing Systems}, pages 987--996. ACM, 2012.

\bibitem{DBLP:journals/corr/abs-1903-03003}
Vincent Froese, Brijnesh~J. Jain, Maciej Rymar, and Mathias Weller.
\newblock Fast exact dynamic time warping on run-length encoded time series.
\newblock {\em CoRR}, abs/1903.03003, 2019.
\newblock URL: \url{http://arxiv.org/abs/1903.03003}, \href
  {http://arxiv.org/abs/1903.03003} {\path{arXiv:1903.03003}}.

\bibitem{DBLP:conf/icalp/GoldS17}
Omer Gold and Micha Sharir.
\newblock Dynamic time warping and geometric edit distance: Breaking the
  quadratic barrier.
\newblock In Ioannis Chatzigiannakis, Piotr Indyk, Fabian Kuhn, and Anca
  Muscholl, editors, {\em 44th International Colloquium on Automata, Languages,
  and Programming, {ICALP} 2017, July 10-14, 2017, Warsaw, Poland}, volume~80
  of {\em LIPIcs}, pages 25:1--25:14. Schloss Dagstuhl - Leibniz-Zentrum
  f{\"{u}}r Informatik, 2017.
\newblock \href {https://doi.org/10.4230/LIPIcs.ICALP.2017.25}
  {\path{doi:10.4230/LIPIcs.ICALP.2017.25}}.

\bibitem{rle8}
Guan~Shieng Huang, Jia~Jie Liu, and Yue~Li Wang.
\newblock Sequence alignment algorithms for run-length-encoded strings.
\newblock In {\em International Computing and Combinatorics Conference}, pages
  319--330. Springer, 2008.

\bibitem{edapprox2}
Michal Kouck{\`y} and Michael Saks.
\newblock Constant factor approximations to edit distance on far input pairs in
  nearly linear time.
\newblock In {\em Proceedings of the 52nd Annual ACM SIGACT Symposium on Theory
  of Computing}, pages 699--712, 2020.

\bibitem{DBLP:conf/icalp/Kuszmaul19}
William Kuszmaul.
\newblock Dynamic time warping in strongly subquadratic time: Algorithms for
  the low-distance regime and approximate evaluation.
\newblock In Christel Baier, Ioannis Chatzigiannakis, Paola Flocchini, and
  Stefano Leonardi, editors, {\em 46th International Colloquium on Automata,
  Languages, and Programming, {ICALP} 2019, July 9-12, 2019, Patras, Greece},
  volume 132 of {\em LIPIcs}, pages 80:1--80:15. Schloss Dagstuhl -
  Leibniz-Zentrum f{\"{u}}r Informatik, 2019.
\newblock \href {https://doi.org/10.4230/LIPIcs.ICALP.2019.80}
  {\path{doi:10.4230/LIPIcs.ICALP.2019.80}}.

\bibitem{kuszmaul2021binary}
William Kuszmaul.
\newblock Binary dynamic time warping in linear time.
\newblock {\em arXiv preprint arXiv:2101.01108}, 2021.

\bibitem{landau1998incremental}
Gad~M. Landau, Eugene~W. Myers, and Jeanette~P. Schmidt.
\newblock Incremental string comparison.
\newblock {\em SIAM Journal on Computing}, 27(2):557--582, 1998.

\bibitem{DBLP:journals/pr/Liao05}
T.~Warren Liao.
\newblock Clustering of time series data - a survey.
\newblock {\em Pattern Recognit.}, 38(11):1857--1874, 2005.
\newblock \href {https://doi.org/10.1016/j.patcog.2005.01.025}
  {\path{doi:10.1016/j.patcog.2005.01.025}}.

\bibitem{rle4}
Jia~Jie Liu, Guan-Shieng Huang, Yue-Li Wang, and Richard~CT Lee.
\newblock Edit distance for a run-length-encoded string and an uncompressed
  string.
\newblock {\em Information Processing Letters}, 105(1):12--16, 2007.

\bibitem{rle9}
Veli M{\"a}kinen, Esko Ukkonen, and Gonzalo Navarro.
\newblock Approximate matching of run-length compressed strings.
\newblock {\em Algorithmica}, 35(4):347--369, 2003.

\bibitem{masek1980faster}
William~J Masek and Michael~S Paterson.
\newblock A faster algorithm computing string edit distances.
\newblock {\em Journal of Computer and System sciences}, 20(1):18--31, 1980.

\bibitem{rle7}
J~Mitchell.
\newblock A geometric shortest path problem, with application to computing a
  longest common subsequence in run-length encoded strings.
\newblock {\em Tech-nical Report, Department of Applied Mathemat-ics, SUNY
  StonyBrook, NY}, 1997.

\bibitem{dtwapp3}
Lindasalwa Muda, Mumtaj Begam, and Irraivan Elamvazuthi.
\newblock Voice recognition algorithms using mel frequency cepstral coefficient
  ({MFCC}) and dynamic time warping ({DTW}) techniques.
\newblock {\em arXiv preprint arXiv:1003.4083}, 2010.

\bibitem{dtwapp2}
Mario~E Munich and Pietro Perona.
\newblock Continuous dynamic time warping for translation-invariant curve
  alignment with applications to signature verification.
\newblock In {\em Proceedings of 7th International Conference on Computer
  Vision}, volume~1, pages 108--115, 1999.

\bibitem{needleman1970general}
Saul~B. Needleman and Christian~D. Wunsch.
\newblock A general method applicable to the search for similarities in the
  amino acid sequence of two proteins.
\newblock {\em Journal of molecular biology}, 48(3):443--453, 1970.

\bibitem{sakai2020faster}
Yoshifumi Sakai and Shunsuke Inenaga.
\newblock A faster reduction of the dynamic time warping distance to the
  longest increasing subsequence length.
\newblock {\em arXiv preprint arXiv:2005.09169}, 2020.

\bibitem{sakai2020reduction}
Yoshifumi Sakai and Shunsuke Inenaga.
\newblock A reduction of the dynamic time warping distance to the longest
  increasing subsequence length.
\newblock In {\em 31st International Symposium on Algorithms and Computation
  (ISAAC 2020)}. Schloss Dagstuhl-Leibniz-Zentrum f{\"u}r Informatik, 2020.

\bibitem{schaar2020faster}
Nathan Schaar, Vincent Froese, and Rolf Niedermeier.
\newblock Faster binary mean computation under dynamic time warping.
\newblock {\em arXiv preprint arXiv:2002.01178}, 2020.

\bibitem{DBLP:journals/kais/SharabianiDHDKJ18}
Anooshiravan Sharabiani, Houshang Darabi, Samuel Harford, Elnaz Douzali, Fazle
  Karim, Hereford Johnson, and Shun Chen.
\newblock Asymptotic dynamic time warping calculation with utilizing value
  repetition.
\newblock {\em Knowl. Inf. Syst.}, 57(2):359--388, 2018.
\newblock \href {https://doi.org/10.1007/s10115-018-1163-4}
  {\path{doi:10.1007/s10115-018-1163-4}}.

\bibitem{vintzyuk1968}
Taras~K. Vintsyuk.
\newblock Speech discrimination by dynamic programming.
\newblock {\em Cybernetics}, 4(1):52--57, 1968.

\bibitem{DTWapprox2}
Rex Ying, Jiangwei Pan, Kyle Fox, and Pankaj~K Agarwal.
\newblock A simple efficient approximation algorithm for dynamic time warping.
\newblock In {\em Proceedings of the 24th ACM SIGSPATIAL International
  Conference on Advances in Geographic Information Systems}, page~21. ACM,
  2016.

\bibitem{dtwapp6}
Yunyue Zhu and Dennis Shasha.
\newblock Warping indexes with envelope transforms for query by humming.
\newblock In {\em Proceedings of the 2003 ACM SIGMOD international conference
  on Management of data}, pages 181--192. ACM, 2003.

\end{thebibliography}
\end{document}